\theoremstyle{plain}
\newtheorem{theorem}{Theorem}[section]
\newtheorem{lemma}[theorem]{Lemma}
\newtheorem{corollary}[theorem]{Corollary}
\newtheorem{fact}[theorem]{Fact}
\theoremstyle{definition}
\theoremstyle{remark}
\newcommand{\ProblemName}[1]{\textsc{#1}}
\newcommand{\kzC}{\ProblemName{$(k, z)$-Clustering}\xspace}
\newcommand{\kMedian}{\ProblemName{$k$-Median}\xspace}
\newcommand{\kMeans}{\ProblemName{$k$-Means}\xspace}
\newcommand{\calH}{\ensuremath{\mathcal{H}}\xspace}
\newcommand{\kMeanspp}{\ensuremath{\textsc{$k$-Means++}}\xspace}
\newcommand{\sklearn}{Scikit-learn\xspace}
\DeclareMathOperator{\poly}{poly}
\DeclareMathOperator{\polylog}{polylog}
\DeclareMathOperator{\cost}{cost}
\DeclareMathOperator{\dist}{dist}
\DeclareMathOperator{\Span}{span}
\DeclareMathOperator{\NN}{NN}
\begin{document}

\title{Coresets for Kernel Clustering}

\author{Shaofeng H.-C. Jiang\thanks{
    Work partially supported a startup fund from Peking University
    and the Advanced Institute of Information Technology, Peking University.
    Email: \texttt{shaofeng.jiang@pku.edu.cn}
}\\
Peking University
\and Robert Krauthgamer\thanks{Work supported partially by ONR Award N00014-18-1-2364, 
the  Israel Science Foundation grant \#1336/23, a Weizmann-UK Making 
Connections Grant, a Minerva Foundation grant, the Weizmann Data Science 
Research Center, and a research grant from the Estate of Harry Schutzman.
    Email: \texttt{robert.krauthgamer@weizmann.ac.il}
  }\\
Weizmann Institute of Science
\and Jianing Lou\thanks{Email: \texttt{loujn@pku.edu.cn}}\\
Peking University
\and Yubo Zhang\thanks{Email: \texttt{1800012916@pku.edu.cn}}\\
Peking University
}

\maketitle

\begin{abstract}
We devise coresets for kernel \kMeans with a general kernel,
and use them to obtain new, more efficient, algorithms.
Kernel \kMeans has superior clustering capability compared to classical \kMeans,
particularly when clusters are non-linearly separable,
but it also introduces significant computational challenges.
We address this computational issue by constructing a coreset, 
which is a reduced dataset that accurately preserves the clustering costs.

Our main result is a coreset for kernel \kMeans
that works for a general kernel and has size $\poly(k\epsilon^{-1})$.
Our new coreset both generalizes and greatly improves all previous results;
moreover, it can be constructed in time near-linear in $n$.
This result immediately implies new algorithms for kernel \kMeans,
such as a $(1+\epsilon)$-approximation in time near-linear in $n$,
and a streaming algorithm using space and update time $\poly(k \epsilon^{-1} \log n)$. 

We validate our coreset on various datasets with different kernels.
Our coreset performs consistently well,
achieving small errors while using very few points.
We show that our coresets can speed up kernel \kMeanspp
(the kernelized version of the widely used \kMeanspp algorithm),
and we further use this faster kernel \kMeanspp for spectral clustering.
In both applications,
we achieve significant speedup and a better  asymptotic growth
while the error is comparable to baselines that do not use coresets.
\end{abstract}

\section{Introduction}
The \kMeans problem has proved to be fundamental for unsupervised learning
in numerous application domains. 
Vanilla \kMeans fails to capture sophisticated cluster structures,
e.g., when the clusters are separable non-linearly, 
but this can be tackled by applying kernel methods 
\cite{DBLP:journals/neco/ScholkopfSM98,DBLP:journals/tnn/Girolami02}.
This has led to kernel \kMeans,
where data points are first mapped to a high-dimensional feature space
(possibly implicitly via a kernel function), 
and then clustered in this richer space using a classical \kMeans.

Formally, a \emph{kernel} for a dataset $X$
is a function $K : X \times X \to \mathbb{R}$
(intended to measure similarity between elements in $X$) 
that can be realized by inner products,
i.e., there exist a Hilbert space $\calH$ and a map $\varphi : X \to \calH$
(called feature space and feature map) such that
\begin{equation}
  \label{eqn:varphi}
  \forall x, y \in X,
  \quad
  \langle \varphi(x), \varphi(y) \rangle = K(x, y).
\end{equation}
In \emph{kernel \kMeans},
the input is a dataset $X$ with weight function $w_X : X \to \mathbb{R}_+$
and a kernel function $K : X \times X \to \mathbb{R}$ as above,
and the goal is to find a $k$-point center set $C \subseteq \calH$
that minimizes the objective
 \begin{equation}
  \label{equ:kmeans_obj}
  \cost^\varphi(X, C) = \sum_{x \in X}{ w_X(x) \cdot \min_{c \in C} \| \varphi(x) - c \|^2 }.
\end{equation}
(An equivalent formulation asks for a $k$-partitioning of $X$, keeping $C$ implicit.)

This kernel version has superior clustering capability compared to classical \kMeans~\cite{DBLP:conf/icpr/ZhangR02,DBLP:journals/pr/KimLLL05},
and has proved useful in different application domains,
such as pattern recognition, natural language processing and social networks. In fact, kernel \kMeans is useful also for solving other clustering problems, such as normalized cut and spectral clustering~\cite{DBLP:conf/kdd/DhillonGK04,DBLP:conf/ecml/DingHS05}.

\paragraph{Computational challenges.}
As observed in previous work~\cite{DBLP:journals/tnn/Girolami02,DBLP:conf/kdd/DhillonGK04},
the \emph{kernel trick} can be applied to rewrite kernel \kMeans
using access only to the kernel $K(\cdot,\cdot)$
and without computing the very high-dimensional map $\varphi$ explicitly.
However, this approach has outstanding computational challenges
(compared to classical \kMeans), essentially because of the kernel trick.
Consider the special case where $k = 1$ and the input is $n$ unweighted points
(i.e., \textsc{$1$-Mean} clustering).
It is well known that the optimal center $c^\star$ has a closed form
$c^\star := \frac{1}{n} \sum_{x \in X}{\varphi(x)}$.
But the kernel trick requires $\Omega(n^2)$ accesses to $K$ 
to evaluate $\cost^\varphi(X, c^\star)$,\footnote{In fact, evaluating $\|c^\star - \varphi(u)\|^2$
  for a single point $u \in X$ already requires $\Theta(n^2)$ accesses, 
  since $\|c^\star - \varphi(u)\|^2 =
  K(u, u) - \frac{2}{n} \sum_{x \in X}{K(x, u)} +
  \frac{1}{n^2} \sum_{x,y \in X}{K(x, y)} $.
}
while in the classical setting such evaluation
takes only $O(n)$ time.

This $\Omega(n^2)$ barrier can be bypassed at the cost of $(1+\epsilon)$-approximation.
In particular, let $S$ be a uniform sample of $\poly(\epsilon^{-1})$ points from $X$, 
and let $\hat{c} := \frac{1}{\vert S\vert}\sum_{x \in S}{\varphi(x)}$ be its \textsc{$1$-Mean};
then with high probability,
$\cost^\varphi(X, \hat{c}) \leq (1 + \epsilon)\cost^\varphi(X, c^\star)$
and evaluating $\cost^\varphi(X, \hat{c})$ takes only $\poly(\epsilon^{-1})n$ time. 
However, this uniform-sampling approach does not generally work for $k \geq 2$,
because if the optimal clustering is highly imbalanced,
a uniform sample is unlikely to include any point from a small cluster.
Alternative approaches, such as dimension reduction, were also
proposed to obtain efficient algorithms for kernel \kMeans,
but they too do not fully resolve the computational issue.
We elaborate on these approaches in Section~\ref{sec:previous}.

\paragraph{Our approach.}
To tackle this computational challenge,
we adapt the notion of coresets~\cite{DBLP:conf/stoc/Har-PeledM04}
to kernel \kMeans.
Informally, a coreset is a tiny reweighted subset of the original dataset on which 
the clustering cost is preserved within $(1\pm \epsilon)$-factor for all candidate centers $C \subseteq \calH$.
This notion has proved very useful for classical \kMeans,
e.g., to design efficient near-linear algorithms.
In our context of kernel \kMeans,
a coreset of size $s$ for an input of size $n$ has a huge advantage
that each of its $k$ optimal centers can be represented
as a linear combination of only $s$ points in the feature space.
Given these $k$ optimal centers (as linear combinations),
evaluating the distance between a point $\varphi(x)$
and a center takes merely $O(s^2)$ time, instead of $O(n^2)$, 
and consequently
the objective can be $(1+\epsilon)$-approximated in time $O(s^2 k n)$.
Moreover, it suffices to use $k$ centers (again as linear combinations)
that are $(1+\epsilon)$-approximately optimal for the coreset $S$.

In addition, coresets are very useful in dealing with massive datasets,
since an offline construction of coresets usually implies 
streaming algorithms~\cite{DBLP:conf/stoc/Har-PeledM04},
distributed algorithms~\cite{DBLP:conf/nips/BalcanEL13}
and dynamic algorithms~\cite{DBLP:conf/esa/HenzingerK20}
via the merge-and-reduce method~\cite{DBLP:conf/stoc/Har-PeledM04},
and existing (offline) algorithms can run efficiently on the coreset,
instead of the original dataset, with minor or no modifications.

Designing coresets for clustering problems has been an active area,
but unfortunately only limited results are currently known
regarding coresets for kernel clustering.
Indeed, all existing results achieve a weaker notion of coreset, apply only to some kernels, and/or have an exponential coreset size (see comparison below).
A major recent success in the area has been
the design coresets for \emph{high-dimensional} Euclidean inputs,
referring to coresets of size $\poly(k\epsilon^{-1})$,
which is \emph{independent of the Euclidean dimension}~\cite{DBLP:journals/siamcomp/FeldmanSS20,DBLP:conf/focs/SohlerW18,DBLP:conf/stoc/BecchettiBC0S19,DBLP:conf/stoc/HuangV20,DBLP:conf/soda/BravermanJKW21,DBLP:conf/stoc/Cohen-AddadSS21,DBLP:conf/icml/FengKW21}.
Intuitively, these new coresets are an excellent fit for kernels,
which are often high-dimensional, or even infinite-dimensional,
but employing these results requires a formal treatment
to streamline technical details like infinite dimension or accesses to $K$.

\subsection{Our Results}
\label{sec:our_reuslt}

We devise a coreset for kernel \kMeans
  that works for a general kernel function and has size $\poly(k\epsilon^{-1})$.
  Our new coreset significantly improves over the limited existing results,
  being vastly more general and achieving much improved bounds. 
(In fact, it also generalizes to kernel \kzC, see Section~\ref{sec:prelim} for definitions.)
Formally, an \emph{$\epsilon$-coreset} for kernel \kMeans with respect to weighted dataset $X$
and kernel function $K : X \times X \to \mathbb{R}$
is a weighted subset $S \subseteq X$,
such that for every feature space $\calH$ and feature map $\varphi$ that realize $K$, as defined in~\eqref{eqn:varphi}, 
\begin{align}
  \label{eqn:coreset}
  \forall C \subseteq \calH, \vert C\vert = k, \quad
  \cost^\varphi(S, C) \in (1\pm \epsilon) \cdot \cost^\varphi(X, C)
\end{align}

Throughout, we assume an oracle access to $K$ takes unit time,
and therefore our stated running times also bound the number of accesses to $K$.
We denote $\tilde{O}(f) = O(f\cdot\polylog f)$ to suppress logarithmic factors.

\begin{theorem}[Informal version of Theorem~\ref{thm:main}]
\label{thm:informal}
Given $n$-point weighted dataset $X$, 
oracle access to a kernel $K : X \times X \to \mathbb{R}$,
integer $k \geq 1$ and $0 < \epsilon < 1$,
one can construct in time $\tilde{O}(nk)$,
a reweighted subset $S\subseteq X$ of size $\vert S\vert = \poly(k\epsilon^{-1})$,
that with high probability is an $\epsilon$-coreset for kernel \kMeans with respect to $X$ and $K$.
\end{theorem}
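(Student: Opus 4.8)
The plan is to regard kernel \kMeans as ordinary Euclidean \kMeans inside the Hilbert space $\calH$, and to port the recent \emph{dimension-independent} coreset machinery for Euclidean \kMeans into this setting, making sure that (i) the construction touches only $K$, and only $\tilde O(nk)$ times, and (ii) the resulting guarantee is simultaneous over \emph{every} feature map $\varphi$ realizing $K$ and \emph{every} $C\subseteq\calH$, including centers outside $\Span\varphi(X)$.

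\emph{Reduction to a finite-dimensional penalized instance.} Fix a realization $\varphi$ and let $\Pi$ be the orthogonal projection onto the subspace $V:=\Span\varphi(X)$, which has dimension $d\le n$. For any $x\in X$ and $c\in\calH$, orthogonality gives $\|\varphi(x)-c\|^2=\|\varphi(x)-\Pi c\|^2+\|c-\Pi c\|^2$, and the last term depends only on $c$. Hence, writing $\mu_j:=\Pi c_j$ and $\delta_j^2:=\|c_j-\Pi c_j\|^2$,
\begin{equation*}
  \cost^\varphi(X,C)\;=\;\sum_{x\in X} w_X(x)\,\min_{1\le j\le k}\bigl(\|\varphi(x)-\mu_j\|^2+\delta_j^2\bigr),
\end{equation*}
and the same identity holds with $X$ replaced by $S$. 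Fixing an isometry $V\cong\mathbb{R}^d$ turns the right-hand side into a \emph{penalized} (additively shifted) \kMeans cost on the point set $\{\varphi(x)\}_{x\in X}\subseteq\mathbb{R}^d$ that is determined by the Gram matrix $K$ alone --- in particular it is independent of the realization. So it suffices to build, from $K$, a reweighting $S$ of $X$ that is an $\epsilon$-coreset for penalized \kMeans in $\mathbb{R}^d$, simultaneously over all tuples $(\mu_1,\delta_1),\dots,(\mu_k,\delta_k)$ with $\mu_j\in\mathbb{R}^d$ and $\delta_j\ge 0$.

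\emph{A near-linear construction.} I would use the Feldman--Langberg importance-sampling framework, whose sampling distribution needs only a rough (say $O(\log k)$-)approximate clustering. I obtain one by running \kMeanspp \emph{with centers restricted to the data points}: since every candidate center is some $\varphi(y)$, each squared distance equals $K(x,x)-2K(x,y)+K(y,y)$ and costs $O(1)$ queries, so each $D^2$-sampling round costs $O(n)$ queries and $k$ rounds cost $O(nk)$; restricting centers to points loses only a factor $2$ against the true kernel-\kMeans optimum, and taking the best of $O(\log n)$ runs makes the $O(\log k)$-approximation hold w.h.p. (this also means the sampling hits imbalanced clusters, unlike plain uniform sampling). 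From the induced clustering I compute in $O(n)$ more time the usual sensitivity upper bounds (proportional to $w_X(x)$ times its squared distance to its assigned center, normalized by the solution's cost, plus $w_X(x)$ over its cluster's total weight), sample $N=\poly(k\epsilon^{-1})$ points proportionally to them, and reweight each sampled point inversely to its sampling probability. The total running time and query count are $\tilde O(nk)$, and $S$ is a reweighted subset of $X$.

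\emph{Size analysis and the main obstacle.} The crux is showing that $N=\poly(k\epsilon^{-1})$ samples suffice \emph{regardless of $d$} (which may be as large as $n$) and for the penalized variant. For a fixed center tuple, concentration of the sample around $\cost^\varphi(X,C)$ follows from a Bernstein/Chernoff bound because sensitivity reweighting bounds the per-point variance; the difficulty is making the bound uniform over all tuples. I would obtain this by invoking and adapting the dimension-independent coreset analyses for Euclidean \kMeans/\kzC~\cite{DBLP:conf/focs/SohlerW18,DBLP:conf/stoc/HuangV20,DBLP:conf/stoc/Cohen-AddadSS21}: the induced range space --- sublevel sets $\{x:\min_j(\|x-\mu_j\|^2+\delta_j^2)\le r\}$, which are still unions of $k$ balls --- has bounded shattering dimension, and the iterative ``group-and-reduce'' argument drives the size down to $\poly(k\epsilon^{-1})$ independently of $d$; crucially, the additive penalties $\delta_j^2$ do not enlarge the range space. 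I expect this uniformity argument --- carrying the Euclidean machinery through the weighted, Hilbert-space, penalty-augmented setting and tracking all parameters through the size-reduction recursion --- to be the main technical obstacle, whereas the $\tilde O(nk)$ running time is essentially the \kMeanspp observation above. Finally, combining the coreset guarantee over all $(\mu_j,\delta_j)$ with the first-step reduction recovers~\eqref{eqn:coreset} for every $\varphi$ and $C$, and a union bound over the two randomized steps yields the high-probability claim; the extension to kernel \kzC is analogous, replacing squared distances by $z$-th powers throughout.
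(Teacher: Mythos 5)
Your reduction is the same one the paper uses at heart --- project centers onto $\Span\varphi(X)$ and observe that the residual contributes a per-center additive term determined only by $K$ --- and your construction ($D^z$-sampling for an $O(\log k)$-approximation, sensitivity-based importance sampling, inverse-probability reweighting, all via the kernel trick in $\tilde O(nk)$ queries) matches the paper's Algorithms~\ref{alg:main}--\ref{alg:dz_sampling}. But the step you yourself flag as the main obstacle --- that the dimension-independent Euclidean coreset analyses carry over to the \emph{penalized} variant with shifts $\delta_j^2$, uniformly over all $(\mu_j,\delta_j)$ --- is exactly the crux of the theorem, and you leave it as an expectation (``I would obtain this by invoking and adapting\dots I expect this uniformity argument to go through'') rather than a proof. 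As written, the argument that sublevel sets are ``still unions of $k$ balls'' and that the group-and-reduce machinery survives the weighted, penalty-augmented, Hilbert-space setting is not established, so the size bound $\poly(k\epsilon^{-1})$ is not actually derived.

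The gap closes with a one-line observation that makes the adaptation unnecessary: absorb the penalty into one extra coordinate. Since
\begin{equation*}
\|\varphi(x)-\mu_j\|^2+\delta_j^2=\bigl\|(\varphi(x);0)-(\mu_j;\delta_j)\bigr\|^2,
\end{equation*}
your penalized instance in $\mathbb{R}^d$ is \emph{verbatim} a standard \kzC instance in $\mathbb{R}^{d+1}\subseteq\mathbb{R}^{n+1}$ (this is the paper's Lemma~\ref{lemma:nplus1}, via $c\mapsto(c^{\|};\|c^{\perp}\|)$), so any off-the-shelf dimension-independent Euclidean coreset theorem applies as a black box, with no re-derivation for penalties and no range-space argument of your own. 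Two further cautions: (i) to keep the construction at $\tilde O(nk)$ kernel accesses you must use a construction that needs only a distance oracle, e.g.\ Theorem~\ref{thm:euc_coreset} from~\cite{DBLP:conf/soda/BravermanJKW21}; the analyses of~\cite{DBLP:conf/focs/SohlerW18,DBLP:journals/siamcomp/FeldmanSS20} you lean on require explicit feature-space representations and are not directly usable here (fine for analysis only, but then you are again proving portability rather than citing a theorem); (ii) small points --- restricting centers to data points loses a factor $4$, not $2$, for squared distances, and a single importance-sampling round typically carries a $\log\|X\|_0$ factor in the size, which the paper removes by iterating the sampling until the size stabilizes.
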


\paragraph{Comparison to previous coresets.}
A \emph{weak} coreset for kernel \kMeans was designed in~\cite{DBLP:conf/compgeom/FeldmanMS07};
  in this notion, the objective is preserved only for certain candidate centers,
  whereas \eqref{eqn:coreset} guarantees this for all centers.
  Moreover, that result pertains only to finite-dimensional kernels. 
  Strong coresets of size $k^{\poly(\epsilon^{-1})}$
  were designed in~\cite{DBLP:journals/siamcomp/FeldmanSS20,schmidt_thesis},
  by computing \kMeans recursively to obtain a total of $k^{\poly(\epsilon^{-1})}$ clusters,
  and taking the geometric mean of each cluster.
  A similar size bound was obtained in~\cite{DBLP:journals/algorithms/BargerF20}
  using a slightly different approach.\footnote{Some of these results do not explicitly mention kernel \kMeans, but their method is applicable.}
  Compared with our results,
  their coreset-size bounds are exponentially larger (in $\epsilon$)
  and their construction time is worse (super-linear in $k$) .
  In addition, their guarantee is weaker
  as it introduces a fixed \emph{shift} $D > 0$ such that
  $\cost^{\varphi}(S, C) + D \in (1 \pm \epsilon) \cdot \cost^{\varphi}(X, C)$,
  whereas in our result $D = 0$.
  On top of that, all previous results work only for \kMeans,
  while our result works for the more general \kzC,
  which in particular includes the well-known variant \kMedian.

\paragraph{Technical overview.}
  Technically, our result builds upon the recent
  coresets for finite-dimensional Euclidean spaces,
  which have dimension-independent size bound $\poly(k\epsilon^{-1})$.
  In particular, we rely on coresets that can be constructed in \emph{near-linear time} even in the kernel setting (via the kernel trick)~\cite{DBLP:conf/soda/BravermanJKW21}.
  In contrast, another recent algorithm~\cite{DBLP:conf/focs/SohlerW18} is slow and uses the explicit representation in the feature space.
  To adapt the coreset of~\cite{DBLP:conf/soda/BravermanJKW21} to the kernel setting,
  the main technical step is an embedding from (infinite-dimensional) Hilbert space
  to finite-dimensional Euclidean space, which eliminates any potential dependence on the ambient space (e.g., which points can serve as centers).
  This step requires a formal treatment, e.g., careful definitions,
  although its proof is concise. 
  Crucially, we apply this embedding step (which could be computationally heavy) only in the analysis, avoiding any blowup in the running time.
In hindsight, all these technical details fit together very smoothly, but this might be less obvious apriori; 
  moreover, the simplicity of our construction is a great advantage for implementation.

\paragraph{FPT-PTAS for kernel \kMeans.}
We can employ our coreset to devise
a $(1+\epsilon)$-approximation algorithm for kernel \kMeans,
that runs in time that is near-linear in $n$ and parameterized by $k$.
This is stated in Corollary~\ref{cor:fpt_ptas}, 
whose corresponding algorithm, presented in Algorithm~\ref{alg:PTAS}, 
computes a coreset $S$ and solves \kMeans on $S$ optimally
by straightforward enumeration over all $k$-partitions of $S$. 
Previously, $(1+\epsilon)$-approximation for kernel \kMeans has been
achieved by \cite{DBLP:conf/focs/KumarSS04},\footnote{\cite{DBLP:conf/focs/KumarSS04} developed an FPT-PTAS only for vanilla (i.e., not kernel) \kMeans, but it be adapted to kernel \kMeans by representing a center as a linear combination of points in $\varphi(X)$. }whose algorithm runs in time $O(n \cdot 2^{\poly(k\epsilon^{-1})})$.
Our algorithm is more efficient,
as it isolates the data size $n$ from the factor $2^{\poly(k\epsilon^{-1})}$.

\begin{corollary}[FPT-PTAS]
\label{cor:fpt_ptas}
Given $n$-point weighted dataset $X$,
oracle access to a kernel $K : X \times X \to \mathbb{R}$,
integer $k \geq 1$ and $0 < \epsilon < 1$,
one can compute in time $O(nk + k^{\poly(k\epsilon^{-1})})$, 
a center set $C$ of $k$ points,
each represented as a linear combination of 
at most $\poly(k\epsilon^{-1})$ points from $\varphi(X)$,
such that with high probability $C$ is a $(1+\epsilon)$-approximation 
for kernel \kMeans on $X$ and $K$. 
In particular, given such $C$, one can find
for each $x\in X$ its closest center in $C$ in time $\poly(k\epsilon^{-1})$.
\end{corollary}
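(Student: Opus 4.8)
The plan is to combine the coreset of Theorem~\ref{thm:informal} with brute-force enumeration over the partitions of the coreset, carrying out every computation through the kernel trick so that $n$ enters the running time only once. First I would invoke Theorem~\ref{thm:informal} with accuracy parameter $\epsilon' := \epsilon/4$ to obtain, in time $\tilde{O}(nk)$, a reweighted subset $S \subseteq X$ of size $m := |S| = \poly(k\epsilon^{-1})$ that with high probability is an $\epsilon'$-coreset for kernel \kMeans on $X$ and $K$. All subsequent work will be performed on $S$ alone.

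Next I would solve kernel \kMeans on $S$ exactly. The structural fact I would rely on, valid in any Hilbert space, is that for a fixed assignment of the weighted points of $S$ to $k$ labeled groups the cost is minimized by placing the center of group $P$ at its weighted centroid $c_P := \tfrac{1}{w_X(P)}\sum_{x\in P} w_X(x)\,\varphi(x)$ (with $w_X(P) := \sum_{x\in P}w_X(x)$), which is a convex combination of the $|P| \le m$ points $\{\varphi(x): x\in P\} \subseteq \varphi(X)$. Hence an optimal $k$-center set $C^\star_S$ for $S$ is found by enumerating all $k^m$ ways to split $S$ into $k$ labeled (possibly empty) groups and, for each, evaluating $\sum_P \sum_{x\in P} w_X(x)\,\|\varphi(x)-c_P\|^2$; expanding the squared norm and using $\langle \varphi(x),\varphi(y)\rangle = K(x,y)$ turns each such evaluation into $O(m^2)$ accesses to $K$ (the groups are disjoint, so $\sum_P |P|^2 \le m^2$). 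This produces $C^\star_S$ in time $k^m \cdot \poly(m) = k^{\poly(k\epsilon^{-1})}$, and by construction each of its centers is a linear combination of at most $m = \poly(k\epsilon^{-1})$ points of $\varphi(X)$.

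Then I would certify the approximation by a short sandwich. Let $C^\star_X \subseteq \calH$ be an optimal $k$-center set for $X$, which exists since $X$ is finite (its centers being the centroids of the optimal partition). Because the coreset guarantee~\eqref{eqn:coreset} holds for every $C\subseteq\calH$ with $|C|=k$, one gets $\cost^\varphi(X,C^\star_S) \le \tfrac{1}{1-\epsilon'}\cost^\varphi(S,C^\star_S) \le \tfrac{1}{1-\epsilon'}\cost^\varphi(S,C^\star_X) \le \tfrac{1+\epsilon'}{1-\epsilon'}\cost^\varphi(X,C^\star_X)$, where the middle inequality is optimality of $C^\star_S$ on $S$; with $\epsilon' = \epsilon/4$ the leading factor is at most $1+\epsilon$, so $C := C^\star_S$ is the desired $(1+\epsilon)$-approximation. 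For the final nearest-center claim I would write each $c_j = \sum_{y\in P_j}\alpha_{j,y}\varphi(y)$, precompute the $k$ scalars $\sum_{y,y'\in P_j}\alpha_{j,y}\alpha_{j,y'}K(y,y')$ once in $O(km^2)$ time, and then for any $x\in X$ evaluate $\|\varphi(x)-c_j\|^2 = K(x,x) - 2\sum_{y\in P_j}\alpha_{j,y}K(x,y) + (\text{precomputed}_j)$ in $O(m)$ accesses per center, i.e.\ $O(km)=\poly(k\epsilon^{-1})$ time to find the closest center of $x$. Altogether the running time is $\tilde{O}(nk)+k^{\poly(k\epsilon^{-1})}$, matching the claim.

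I do not expect a real obstacle. The only points that require care are that the ``centroid is optimal'' fact --- and hence the finiteness of the enumeration and the correctness of the kernel-trick evaluations --- hold verbatim in an arbitrary, possibly infinite-dimensional, Hilbert space, so that $\varphi$ never needs to be materialized; and the routine accounting that keeps $n$ out of the exponential term.
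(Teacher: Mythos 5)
Your proposal is correct and follows essentially the same route as the paper, which proves Corollary~\ref{cor:fpt_ptas} by constructing the coreset $S$ of Theorem~\ref{thm:informal} and then solving \kMeans on $S$ optimally via enumeration over all $k$-partitions of $S$, with the centroids (as linear combinations of coreset points) evaluated through the kernel trick. Your additional details --- the $\epsilon/4$ rescaling, the sandwich argument via the coreset guarantee, and the $\poly(k\epsilon^{-1})$-time nearest-center evaluation --- are exactly the routine steps the paper leaves implicit.
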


\begin{algorithm}[t]
  \caption{FPT-PTAS for kernel \kMeans on dataset $X$ with kernel $K$}
  \label{alg:PTAS} 
  \begin{algorithmic}[1]
    \State construct $\epsilon$-coreset $S$ of size $\poly(k\epsilon^{-1})$ for kernel \kzC on $X$ with kernel $K$
    
    \Comment{use Theorem~\ref{thm:informal} }
    \State enumerate over all $k^{\poly(k\epsilon^{-1})}$ $k$-partitions of $S$
    to find $k$-partition $\mathcal{P}=\{P_1,\dots,P_k\}$ 
    with smallest \kMeans objective 
      $$
      \sum_{i=1}^k\frac{1}{\vert P_i\vert}\sum_{x,y\in P_i}\|x-y\|^2
      $$ \Comment{compute distances using the kernel trick}
      \\
      \Return optimal center set for $\mathcal{P}$,
      i.e., $C=\{c_1,\ldots,c_k\}$,
      where each $c_i$ is represented as $\frac{1}{\vert P_i\vert}\sum_{x\in P_i}\varphi(x)$
  \end{algorithmic}
\end{algorithm}

\paragraph{Streaming algorithms.}
In fact, for the purpose of finding near-optimal solutions,
it already suffices to preserve the cost for centers coming from $\Span(\varphi(X))$ (see Fact~\ref{fact:well_define}) which is an $n$-dimensional subspace.
However, our definition of coreset in~\eqref{eqn:coreset} is much stronger, in that
the objective is preserved even for centers coming from a
possibly infinite-dimensional feature space.
This stronger guarantee ensures that the coreset is composable,
and thus the standard merge-and-reduce method can be applied.
In particular, our coreset implies the \emph{first} streaming algorithm
for kernel \kMeans.

\begin{corollary}[Streaming kernel \kMeans]
\label{cor:streaming}
There is a streaming algorithm that given a dataset $X$
presented as a stream of $n$ points,
and oracle access to a kernel $K : X \times X \to \mathbb{R}$,
constructs a reweighted subset $S \subseteq X$ of $\poly(k\epsilon^{-1})$ points
using $\poly(k\epsilon^{-1} \log n)$ words of space and update time,
such that with high probability $S$ is an $\epsilon$-coreset for \kMeans with respect to $X$ and $K$.
\end{corollary}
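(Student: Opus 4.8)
The plan is to combine the offline, near-linear-time coreset construction of Theorem~\ref{thm:informal} (Theorem~\ref{thm:main}) with the standard \emph{merge-and-reduce} framework~\cite{DBLP:conf/stoc/Har-PeledM04}. The enabling observation is that our coreset is \emph{composable}: since the guarantee~\eqref{eqn:coreset} holds for \emph{every} Hilbert space $\calH$ and feature map $\varphi$ realizing $K$, and for \emph{all} $k$-point center sets $C \subseteq \calH$, an $\epsilon_1$-coreset of an $\epsilon_2$-coreset is an $(\epsilon_1+\epsilon_2+\epsilon_1\epsilon_2)$-coreset of the original, and the union (as weighted multisets) of $\epsilon$-coresets of disjoint parts of $X$ is an $\epsilon$-coreset of $X$. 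Moreover, feeding a previously constructed weighted subset $S'\subseteq X$ back into the construction of Theorem~\ref{thm:informal} only requires oracle access to $K$ restricted to $S'\times S'$, which we have, and its output is again a subset of $S'$, hence of $X$.

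First I would set a per-level accuracy $\epsilon' := \epsilon/(c\log n)$ for a suitable constant $c$, and a block size $B := \poly(k\epsilon'^{-1}) = \poly(k\epsilon^{-1}\log n)$ equal to the coreset size produced at accuracy $\epsilon'$. The stream is partitioned into consecutive blocks of $B$ points; each block is reduced to a coreset, and these coresets are organized in a balanced binary tree: whenever two coresets at the same level are present, they are merged (union) and re-coreseted with accuracy $\epsilon'$, yielding one coreset a level up. By composability, a node at height $i$ is an $((1+\epsilon')^i-1)$-coreset of the sub-stream below it, and the tree has depth $h=O(\log(n/B))=O(\log n)$, so the root is an $\epsilon/2$-coreset of $X$ for an appropriate $c$. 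At any moment only $O(\log n)$ nodes (one per level, plus the block currently filling) must be retained, each of size $\poly(k\epsilon^{-1}\log n)$, giving total space $\poly(k\epsilon^{-1}\log n)$ words (kernel values among stored points are recomputed on demand). Each reduce step runs the $\tilde O(nk)$-time construction on $O(B)$ points, costing $\poly(k\epsilon^{-1}\log n)$ time; amortized over a block this is $\poly(k\epsilon^{-1}\log n)$ per update, and a standard de-amortization makes it worst-case. At the end, one final application of Theorem~\ref{thm:informal} with accuracy $\epsilon/2$ compresses the root to an $\epsilon$-coreset of size $\poly(k\epsilon^{-1})$. If $n$ is unknown, the usual doubling guess for $n$ (restarting with a fresh $\epsilon'$ when exceeded) costs only constant factors. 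For the ``with high probability'' guarantee, each of the $O(n/B)$ coreset constructions is run with failure probability $n^{-\Theta(1)}$ — the running time of Theorem~\ref{thm:informal} grows only polylogarithmically in the inverse failure probability, so the stated bounds are unaffected — and we union-bound.

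The main obstacle is precisely the composability step \emph{in the kernel setting}: one must verify that merging two coresets and re-coreseting preserves cost \emph{uniformly over the (possibly infinite-dimensional) feature space}. This is where the strong form of~\eqref{eqn:coreset} — ``for all $\calH,\varphi$ and for all $C$'' — is indispensable, as opposed to a span-restricted guarantee (cf.\ Fact~\ref{fact:well_define}) or the weak coresets of prior work: a coreset that only approximated $\cost^\varphi$ for centers in $\Span(\varphi(X'))$ of the current node $X'$ would fail once $X'$ is merged with another block, since the relevant span changes. Everything else is routine merge-and-reduce bookkeeping over error accumulation, space, and update time.
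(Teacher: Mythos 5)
Your proposal is correct and follows essentially the same route as the paper, which obtains Corollary~\ref{cor:streaming} by applying the standard merge-and-reduce framework to the coreset of Theorem~\ref{thm:main}, relying exactly on the composability that the strong ``for all $\calH,\varphi$ and all $C$'' guarantee in~\eqref{eqn:coreset} provides. Your elaboration of the bookkeeping (per-level accuracy $\epsilon/\Theta(\log n)$, tree depth, space, de-amortization, and boosting the failure probability for a union bound) is the standard filling-in of details the paper leaves implicit.
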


\paragraph{Experiments and other applications.}
We validate the efficiency and accuracy of our coresets on various datasets 
with polynomial and Gaussian kernels, which are frequently-used kernels. 
For every dataset, kernel, and coreset-size that we test,
our coreset performs consistently better than uniform sampling which serves as a baseline.
In fact, our coreset achieves less than $10\%$ error using only about $1000$ points for every dataset.

We also showcase significant speedup to several applications
that can be obtained using our coresets.
Specifically, we adapt the widely used \kMeanspp~\cite{DBLP:conf/soda/ArthurV07} to the kernel setting,
and we compare the running time and accuracy of this kernelized \kMeanspp
with and without coresets.
In these experiments, we observe a significant speedup and more importantly,
a better asymptotic growth of running time
of \kMeanspp when using coresets, while achieving a very similar error.
Furthermore, this new efficient version of kernelized \kMeanspp
(based on coresets) is applied to solve spectral clustering,
using the connection discoverd by~\cite{DBLP:conf/kdd/DhillonGK04}.
Compared to the implementation provided by \sklearn~\cite{sklearn},
our algorithm often achieves a better result and uses significantly less time.
Hence, our coreset-based approach can potentially become
the leading method for solving spectral clustering in practice.

\subsection{Comparison to Other Approaches}
\label{sec:previous}

The computational issue of kernel \kMeans
is an important research topic and has attracted significant attention.
In the following, we compare our result with previous work
that is representative of different approaches for the problem.

Uniform sampling of data points is a commonly used technique,
which fits well in kernel clustering, 
because samples can be drawn without any access to the kernel.
While the coresets that we use rely on sampling, 
we employ importance sampling, which is non-uniform by definition.
\cite{DBLP:conf/kdd/ChittaJHJ11} employs uniform sampling for kernel \kMeans, 
but instead of solving kernel \kMeans on a sample of data points directly
(as we do),
their method works in iterations, similarly to Lloyd's algorithm,
that find a center set that is a linear combination of the sample. 
However, it has no worst-case guarantees on the error or on the running time,
which could be much larger than $\tilde{O}(nk)$.
Lastly, this method does not generalize easily to other sublinear settings,
such as streaming (as our Corollary~\ref{cor:streaming}).
\cite{DBLP:conf/icpr/RenD20} analyze uniform sampling for \kMeans
in a Euclidean space, which could be the kernel's feature space. 
In their analysis, the number of samples (and thus running time)
crucially depends on the diameter of the dataset and on the optimal objective value,
and is thus not bounded in the worse-case.
This analysis builds on several earlier papers, 
for example~\cite{DBLP:journals/rsa/CzumajS07} achieve bounds of similar flavor
for \kMeans in general metric spaces.

Another common approach to speed up kernel \kMeans
is to approximate the kernel $K$ using dimension-reduction techniques.
In a seminal paper, \cite{DBLP:conf/nips/RahimiR07} proposed a method
that efficiently computes
a low (namely, $O(\log{n})$) dimensional feature map $\tilde\varphi$
that approximates $\varphi$ (without computing $\varphi$ explicitly),
and this $\tilde\varphi$ can be used in downstream applications.
Their method is based on designing random Fourier features,
and works for a family of kernels that includes the Gaussian one,
but not general kernels.
This method was subsequently tailored to kernel \kMeans
by~\cite{DBLP:conf/icdm/ChittaJJ12},
and another followup work by~\cite{DBLP:conf/alt/ChenP17}
established worst-case bounds for kernel \kMeans with Gaussian kernels.
Despite these promising advances, we are not aware of any work based on 
dimension reduction that can handle a general kernel function (as in our approach).
In a sense, these dimension-reduction techniques are ``orthogonal''
to our data-reduction approach,
and the two techniques can possibly be combined to yield even better results.

An alternative dimension-reduction approach
is low-rank approximation of the kernel matrix $K$.
Recent works by~\cite{DBLP:conf/nips/MuscoM17} and by~\cite{DBLP:journals/jmlr/WangGM19}
present algorithms based on Nystr\"om approximation
to compute a low-rank (namely, $O(k / \epsilon)$) approximation 
$\widetilde{K}:=UU^T$
to the kernel matrix $K$ in time near-linear in $n$,
where $U\in \mathbb{R}^{n\times O(k / \epsilon)}$ 
defines an embedding of data points into $\mathbb R^{O(k / \epsilon)}$.
One might then try to construct a coreset on the subspace of $U$ via 
applying \kMeans coreset construction after dimension-reduction,
however, it does not satisfy our definition, in which we require the objective 
to be preserved for all centers from any ambient space that realizes $K$,
not only a specific space induced by $U$. Such a seemingly slight difference 
plays a great role when designing algorithms for sublinear models, such as streaming,
since it ensures the coreset is mergable 
(i.e. $\mathrm{coreset}(A)\cup \mathrm{coreset}(B)$ is a coreset of $A\cup B$), 
so that the standard merge-and-reduce technique can be applied.

 \section{Preliminaries}
\label{sec:prelim}

\paragraph{Notations.}
A weighted set $U$ is a finite set $U$ associated with a weight function $w_U : U \to \mathbb{R}_+$.
For such $U$, let $\|U\|_0$ be the number of distinct elements in it.
For a weight function as above and a subset $S \subseteq U$,
define $w_U(S) := \sum_{u \in S}{w_U(u)}$.
For any other map $f : U \to V$ (not a weight function),
we follow the standard definition $f(S) := \{ f(x) : x \in S\}$.
For an integer $t \geq 1$, let $[t] := \{ 1, \ldots, t \}$.
For $x > 0$ and integer $i \geq 1$,
let $\log^{(i)}{x}$ be the $i$-th iterated log of $x$,
i.e., $\log^{(1)}{x} = \log{x}$
and $\log^{(i)}{x} = \log (\log^{(i-1)}{x})$ for $i \geq 2$.
\iffalse
\begin{align*}
    \log^{(i)}{x} =
    \begin{cases}
        \log{x} & i = 1 \\
        \log{\log^{(i-1)}{x}} & i \geq 2.
    \end{cases}
\end{align*}
\fi

\paragraph{Kernel functions.}
Let $X$ be a set of $n$ points.
A function $K : X \times X \rightarrow \mathbb{R}$ is a kernel function
if the $n \times n$ matrix $M$ such that $M_{ij} = K(x_i, x_j)$ (where $x_i, x_j \in X$) is positive semi-definite.
Since $M$ is positive semi-definite,
there exists a map $\varphi$ from $X$ to some Hilbert space $\mathcal{H}$,
such that all $x, y \in X$ satisfy $K(x, y) = \langle \varphi(x), \varphi(y) \rangle$.
This above (existence of a map $\varphi$ of into $\calH$) can be extended
to infinite $X$, e.g., $X=\mathbb{R}^d$, by Mercer's Theorem. 
The distance between $x',y' \in \calH$ is defined as
$\dist(x', y') := \|x' - y'\| = \sqrt{\langle x' - y', x' - y' \rangle}$. 
Hence, the distance $\dist(\varphi(x), \varphi(y))$ for $x,y \in X$
can be represented using $K$ as
\begin{align*}
    \dist(\varphi(x), \varphi(y)) = \|\varphi(x)- \varphi(y)\| = \sqrt{K(x, x) + K(y, y) - 2K(x, y)}.
\end{align*}
We refer to a survey by~\cite{DBLP:journals/corr/abs-2106-08443} for a more comprehensive introduction to kernel functions.

\paragraph{Data model.}
For a dataset $X$ associated with a kernel (as in Theorem~\ref{thm:main}), 
it is convenient to assume that $X = \{1, \ldots, n\}$
and one has oracle access to the kernel function. 
Our actual requirement is that 
every point in $X$ can be stored in $O(1)$ space and manipulated in $O(1)$ time,
and the kernel can be evaluated on any pair of points from $X$ in $O(1)$ time. 
A similar model is assumed when $X$ is associated with a distance function
(as in Theorem~\ref{thm:euc_coreset}).

\paragraph{Kernel \kzC.}
In the the kernel \kzC problem, 
the input is a weighted dataset $X$ of $n$ objects,
a kernel function $K : X \times X \to \mathbb{R}$,
an integer $k \geq 1$, and $z > 0$.
The goal is to find a $k$-point center set $C \subseteq \calH$
that minimizes the objective
\begin{align}
  \label{equ:kz_obj}
    \cost_z^\varphi(X, C) := \sum_{x \in X}{ w_X(x) (\dist(\varphi(x), C))^z },
\end{align}
where $\calH$ is an induced Hilbert space of $K$ and $\varphi : X \to \calH$ is its feature map,
and 
$\dist(\varphi(x), C) := \min_{c \in C}{\dist(\varphi(x), c)} = \min_{c \in C}{\|\varphi(x) - c\|}$.
The case $z=2$ clearly coincides with kernel \kMeans whose objective is~\eqref{equ:kmeans_obj}.
The (non-kernel) \kzC problem may be viewed as kernel \kzC with kernel $K(x, y) = \langle x, y\rangle$ and identity feature map $\varphi(x) = x$.

While the feature map $\varphi$ might not be unique,
we show below that this kernel \kzC is well defined,
in the sense that the optimal value is independent of $\varphi$.
The following two facts are standard and easy to prove.

\begin{fact}
\label{fact:well_define}
For every map $\varphi$ into $\calH$, there is an optimal solution $C^\star$
in which every center point $c\in C^\star$ lies inside $\Span(\varphi(X))$,
and is thus a linear combination of $\varphi(X)$. 
\end{fact}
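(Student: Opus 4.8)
The plan is to take any optimal center set and "push" each of its centers into the subspace $V := \Span(\varphi(X))$ by orthogonal projection, and to show that this operation never increases the clustering cost. Since $X$ is finite, $V$ is a finite-dimensional, hence closed, subspace of $\calH$, so the orthogonal projection $\Pi_V : \calH \to V$ is well defined. For any $x \in X$ we have $\varphi(x) \in V$, and for any candidate center $c \in \calH$ the vector $c - \Pi_V c$ is orthogonal to $V$ while $\varphi(x) - \Pi_V c \in V$; the Pythagorean identity then gives $\|\varphi(x) - c\|^2 = \|\varphi(x) - \Pi_V c\|^2 + \|c - \Pi_V c\|^2$, and in particular $\dist(\varphi(x), \Pi_V c) \le \dist(\varphi(x), c)$.

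Next I would lift this per-center inequality to the full objective. Given an optimal solution $C^\star$ (whose existence is discussed below), set $C' := \Pi_V(C^\star) \subseteq V$. For every $x \in X$, $\min_{c' \in C'} \dist(\varphi(x), c') \le \min_{c \in C^\star} \dist(\varphi(x), c)$ by the previous step, and since $t \mapsto t^z$ is nondecreasing on $[0, \infty)$, the same inequality holds after raising both sides to the $z$-th power. Multiplying by the nonnegative weights $w_X(x)$ and summing over $x \in X$ yields $\cost_z^\varphi(X, C') \le \cost_z^\varphi(X, C^\star)$. As $C^\star$ is optimal, equality holds, so $C'$ is itself an optimal solution whose centers all lie in $V = \Span(\varphi(X))$; each such center is therefore a finite linear combination of $\varphi(X)$. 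If $\lvert C'\rvert < k$ because two projected centers coincide, one simply pads $C'$ with arbitrary points of $V$ (e.g.\ extra copies of some $\varphi(x)$), which leaves the objective unchanged.

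The one point that genuinely needs care — and the main obstacle — is the existence of an optimal $C^\star$, since kernel \kzC is defined as an infimum over $k$-subsets of the possibly infinite-dimensional $\calH$. I would dispatch this by first applying the same projection argument to an arbitrary minimizing sequence, which shows the infimum is unchanged if we restrict to $k$-subsets of the finite-dimensional $V$; then standard facts about \kzC in finite-dimensional Euclidean space — namely, that for a fixed partition of $X$ an optimal center of each part can be taken in the convex hull of that part's points, so the search reduces to a compact set over finitely many partitions on which the continuous objective attains its minimum — give existence of $C^\star$. Everything else in the argument is the elementary Pythagorean computation above.
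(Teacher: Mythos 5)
Your proof is correct and is exactly the standard projection argument the paper has in mind (the paper states this fact without proof, calling it ``standard and easy to prove''): project each center orthogonally onto $\Span(\varphi(X))$ and use the Pythagorean identity to see that no point's distance to its nearest center, and hence the cost $\cost_z^\varphi(X,\cdot)$, can increase. Your additional treatment of the existence of an optimal $C^\star$ --- first restricting the infimum to the finite-dimensional span via the same projection, then reducing to compact convex hulls over the finitely many partitions of $X$ --- is careful and correct, and goes slightly beyond what the paper bothers to address.
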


\begin{corollary}
\label{cor:well_define}
The optimal value of~\eqref{equ:kz_obj} can be represented
as a function of kernel values $K(x,y)$, 
thus invariant of $\varphi$.
\end{corollary}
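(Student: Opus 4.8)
The plan is to invoke Fact~\ref{fact:well_define} to turn the optimization into a finite-dimensional one whose objective is visibly a function of the kernel matrix $M=(K(x,y))_{x,y\in X}$ alone. Fix any feature map $\varphi:X\to\calH$ realizing $K$. By Fact~\ref{fact:well_define} there is an optimal center set lying inside $\Span(\varphi(X))$, so the optimal value of~\eqref{equ:kz_obj} equals the infimum of $\cost_z^\varphi(X,C_\alpha)$ over all coefficient matrices $\alpha=(\alpha_{j,y})_{j\in[k],\,y\in X}\in\mathbb{R}^{k\times n}$, where $C_\alpha:=\{\sum_{y\in X}\alpha_{j,y}\varphi(y):j\in[k]\}$; indeed every center drawn from the span has such a representation, and conversely every $C_\alpha$ is a valid candidate.

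First I would expand a single squared distance. For $x\in X$ and $j\in[k]$, bilinearity of $\langle\cdot,\cdot\rangle$ gives
\[
  \Big\|\varphi(x)-\textstyle\sum_{y}\alpha_{j,y}\varphi(y)\Big\|^2
  = K(x,x) - 2\sum_{y}\alpha_{j,y}K(x,y) + \sum_{y,y'}\alpha_{j,y}\alpha_{j,y'}K(y,y'),
\]
which depends only on $\alpha$ and on entries of $M$, not on $\varphi$. Hence $\dist(\varphi(x),C_\alpha)^z=\min_{j\in[k]}\big(\|\varphi(x)-\sum_y\alpha_{j,y}\varphi(y)\|^2\big)^{z/2}$ is a fixed function $g_x(\alpha;M)$ of $\alpha$ and $M$, and therefore $\cost_z^\varphi(X,C_\alpha)=\sum_{x\in X}w_X(x)\,g_x(\alpha;M)$ is a fixed function of $\alpha$ and $M$ as well, independent of the particular $\varphi$.

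Putting these together, the optimal value of~\eqref{equ:kz_obj} equals $\inf_{\alpha\in\mathbb{R}^{k\times n}}\sum_{x\in X}w_X(x)\,g_x(\alpha;M)$, an expression that references the Hilbert space only through $M$, which is determined by $K$. This establishes the claimed kernel-only representation and the invariance in $\varphi$. The only subtlety is ensuring that restricting attention to centers of the form $C_\alpha$ loses nothing — this is exactly what Fact~\ref{fact:well_define} provides — and, if one also wants the infimum to be attained, it again follows from Fact~\ref{fact:well_define} (an optimum exists in the span) together with continuity of each $g_x$. That bookkeeping is the one place where a small amount of care is needed, and it is the main (mild) obstacle in an otherwise short argument.
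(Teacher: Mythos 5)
Your argument is correct and is exactly the standard derivation the paper has in mind: the paper gives no explicit proof (it calls both facts ``standard and easy to prove''), and the intended route is precisely yours --- use Fact~\ref{fact:well_define} to restrict centers to $\Span(\varphi(X))$, then expand the distances bilinearly so the objective depends only on the coefficient matrix and the kernel values. Nothing is missing; your remark about attainment of the infimum via the same fact is the right bookkeeping.
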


\paragraph{$\epsilon$-Coresets for kernel \kzC.}
For $0 < \epsilon < 1$, an $\epsilon$-coreset for kernel \kzC on a weighted dataset $X$ and a kernel function $K$
is a reweighted subset $S \subseteq X$,
such that for every Hilbert space $\calH$ and map $\varphi : X \to \calH$ 
satisfying~\eqref{eqn:varphi},
we have
\begin{align*}
    \forall C \subseteq \calH, \vert C\vert = k,
    \quad \cost_z^\varphi(S, C) \in (1 \pm \epsilon) \cdot \cost_z^\varphi(X, C).
\end{align*}
The case $z=2$ clearly coincides with~\eqref{eqn:coreset}.

 \section{Coresets for Kernel \kzC}

\begin{theorem}
    \label{thm:main}
    Given $n$-point weighted dataset $X$, 
    oracle access to a kernel $K : X \times X \to \mathbb{R}$,
    $z \geq 1, 0 < \epsilon < 1$, and integer $k \geq 1$,
    one can construct in time $\tilde{O}(nk)$,
    a reweighted subset $S\subseteq X$ of size $\|S\|_0 = 2^{O(z)} \cdot \poly(k\epsilon^{-1})$,
    that with high constant probability is an $\epsilon$-coreset for kernel \kzC with respect to $X$ and $K$.
\end{theorem}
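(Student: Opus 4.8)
The plan is to reduce kernel \kzC to ordinary Euclidean \kzC in a finite-dimensional space, apply a known near-linear-time coreset construction there (specifically the one of~\cite{DBLP:conf/soda/BravermanJKW21}), and then argue that the resulting coreset is valid for the kernel problem \emph{for every} feature map $\varphi$, not just the one we worked with.

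Let me think about the structure.

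**Step 1: Make the feature space concrete and finite-dimensional.**
By Fact~\ref{fact:well_define} and Corollary~\ref{cor:well_define}, the optimal value only depends on the kernel matrix $M$, and there's always an optimal solution inside $\Span(\varphi(X))$, which has dimension at most $n$. So I can fix *one* explicit feature map: take a Cholesky-type / eigendecomposition $M = V V^T$ with $V \in \mathbb{R}^{n \times r}$, $r \le n$, and set $\varphi_0(x_i) = $ the $i$-th row of $V$. This realizes $K$ and lives in $\mathbb{R}^r \subseteq \mathbb{R}^n$.

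**Step 2: Coreset in the concrete space.**
Now $\{\varphi_0(x_i)\}$ is a finite point set in $\mathbb{R}^n$, and I want a coreset for \kzC on it. The catch: I must not actually compute $V$ (that's $\Omega(n^2)$ or worse). But the construction of~\cite{DBLP:conf/soda/BravermanJKW21} — which I'm told runs in near-linear time "even in the kernel setting (via the kernel trick)" — only needs pairwise distances $\dist(\varphi_0(x), \varphi_0(y)) = \sqrt{K(x,x) + K(y,y) - 2K(x,y)}$ and an approximate solution. Sensitivity sampling for \kzC needs: (a) a constant-factor approximate clustering (get one via $k$-means++-style / $D^z$-sampling, which only queries distances, hence only $K$), (b) sensitivity upper bounds computed from that clustering (again only distances). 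So run this in $\tilde O(nk)$ time, obtaining a reweighted $S \subseteq X$ of size $2^{O(z)}\poly(k\epsilon^{-1})$ that is an $\epsilon$-coreset for \kzC on $\{\varphi_0(x_i)\}$.

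**Step 3 (the main obstacle): transfer to *every* $\varphi$.**
I now have: for all $C \subseteq \mathbb{R}^r$ with $|C| = k$, $\cost_z^{\varphi_0}(S, C) \in (1\pm\epsilon)\cost_z^{\varphi_0}(X, C)$. I need the same for an arbitrary Hilbert space $\mathcal{H}$, feature map $\varphi$, and $C \subseteq \mathcal{H}$. The key fact is that the cost function, for a *fixed* center set, is still determined by inner products once we also know the inner products $\langle \varphi(x), c\rangle$ and $\langle c, c'\rangle$. So the real statement I need is: given any $\varphi: X \to \mathcal{H}$ and any $C = \{c_1,\dots,c_k\} \subseteq \mathcal{H}$, there is an *isometric embedding* of $\varphi(X) \cup C$ into $\mathbb{R}^r$ (or a slightly larger $\mathbb{R}^{r'}$) whose restriction to $\varphi(X)$ agrees, up to an orthogonal transformation, with $\varphi_0$. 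Concretely: the Gram matrix of $\varphi(X) \cup C$ is PSD of rank $\le r + k$, so it factors through $\mathbb{R}^{r+k}$; moreover one can choose the factorization so that the sub-block for $\varphi(X)$ is exactly $V$ (up to orthogonal change of basis, since any two factorizations of the same PSD matrix differ by an isometry). Both sides of the coreset inequality in~\eqref{equ:kz_obj} are sums of terms $w(x)\cdot\min_i \|\varphi(x) - c_i\|^z$, each a function only of pairwise inner products, hence invariant under this embedding. So the inequality for $(\varphi_0, \text{images of } C \text{ in } \mathbb{R}^{r+k})$ — which holds by Step 2 — implies it for $(\varphi, C)$. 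This is the step the excerpt flags as "requires a formal treatment, e.g., careful definitions, although its proof is concise," and I agree: the content is just "PSD factorizations are unique up to isometry, and cost is a function of the Gram matrix," but one must be careful that the embedding can simultaneously be taken to restrict correctly on $\varphi(X)$ and to be defined on all of $C$, and that $k$ extra dimensions suffice regardless of how infinite-dimensional $\mathcal{H}$ is.

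**Step 4: Assemble.**
Combine: Step 2 gives the $\epsilon$-coreset property in $\mathbb{R}^{r+k}$ for the images of any $C$; Step 3 says this is equivalent to the kernel-\kzC coreset property for that $C$ under arbitrary $\varphi$. Take a union bound / absorb the failure probability of the $\tilde O(nk)$ randomized construction into "high constant probability." The size bound $2^{O(z)}\poly(k\epsilon^{-1})$ and running time $\tilde O(nk)$ come verbatim from the imported construction. One bookkeeping point: the near-linear construction's guarantees must be stated for *weighted* inputs and with dimension-independent size — both are true for~\cite{DBLP:conf/soda/BravermanJKW21} — so there is nothing lost by $\varphi_0$ living in dimension up to $n$.

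I'd expect Step 3 to be where essentially all the (modest) difficulty lies; Steps 1, 2, 4 are invocations of known results plus routine linear algebra.
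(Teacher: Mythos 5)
Your proposal is correct, and its overall architecture matches the paper's: run the representation-oblivious, distance-only construction of~\cite{DBLP:conf/soda/BravermanJKW21} via the kernel trick (so the finite-dimensional realization is used only in the analysis), and transfer its dimension-independent Euclidean guarantee back to arbitrary feature spaces. Where you differ is in how the transfer (your Step 3) is carried out. The paper proves a single universal statement (Lemma~\ref{lemma:nplus1}): decompose any center $c$ as $c^{\|}+c^{\perp}$ relative to $\Span(\varphi(X))$ and map $c\mapsto (c^{\|};\|c^{\perp}\|)\in\mathbb{R}^{n+1}$, which preserves all data--center distances at once (it deliberately does not preserve center--center distances, which are irrelevant to the cost); Corollary~\ref{cor:nplus1} then reduces the coreset guarantee to feature maps into $\mathbb{R}^{n+1}$, and Theorem~\ref{thm:euc_coreset} finishes. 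You instead fix a canonical realization $\varphi_0$ from a factorization $M=VV^T$ and, for each center set $C$ separately, factor the Gram matrix of $\varphi(X)\cup C$ through $\mathbb{R}^{r+k}$ and align its $X$-block with $V$ using the fact that PSD factorizations are unique up to an orthogonal transformation. Both arguments are sound; the paper's version is slightly leaner (one extra coordinate, no canonical $\varphi_0$, no alignment step), while yours makes explicit the ``cost depends only on the Gram matrix'' viewpoint and isometrically embeds the centers as well, which is more than is needed. Two bookkeeping points you should make explicit: the Euclidean guarantee must be invoked with the ambient dimension fixed in advance at $r+k$ (equivalently $n+k$), which is legitimate precisely because the algorithm's execution depends only on pairwise distances and the size bound of Theorem~\ref{thm:euc_coreset} is dimension-independent; and no union bound over center sets or feature maps is needed (nor possible) --- the single high-probability event that the output is a coreset in $\mathbb{R}^{r+k}$ already covers every $(\varphi,C)$ after your deterministic transfer.
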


At a high level, our proof employs 
recent constructions of coresets for \kzC in Euclidean spaces,
in which the coreset size is independent of the Euclidean dimension~\cite{DBLP:conf/focs/SohlerW18,DBLP:journals/siamcomp/FeldmanSS20,DBLP:conf/stoc/HuangV20,DBLP:conf/soda/BravermanJKW21}.
However, these coresets are designed for finite-dimensional Euclidean spaces,
and are thus not directly applicable to our feature space $\calH$,
which might have infinite dimension.

To employ these coreset constructions, we show 
that the data points in the feature space $\calH$
embed into an $(n+1)$-dimensional (Euclidean) space,
without any distortion to distances between data points and centers (Lemma~\ref{lemma:nplus1}).
This observation is similar to one previously made by~\cite{DBLP:conf/focs/SohlerW18} for a different purpose.
Due to this embedding, it suffices to construct coresets
in a limited setting 
where centers come only from $\mathbb{R}^{n+1}$ (Corollary~\ref{cor:nplus1}).

\begin{lemma}
\label{lemma:nplus1}
Let $\calH$ be a Hilbert space
and let $X \subseteq \calH$ be a subset of $n$ points.
Then there exists a map $f : \calH \to \mathbb{R}^{n+1}$
such that 
$\forall x \in X, c \in \calH,
  \quad
  \|x - c\| = \|f(x) - f(c)\|$.
\end{lemma}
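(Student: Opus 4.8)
The plan is to fix an arbitrary point $x_0 \in X$, translate so that $x_0$ maps to the origin, and then use the finite-dimensionality of $\Span(X - x_0)$ together with an orthogonal decomposition of $\calH$. Concretely, let $V := \Span\{x - x_0 : x \in X\}$, which has dimension $m \le n$, and write $\calH = V \oplus V^\perp$ so that every $c \in \calH$ decomposes uniquely as $c = c_V + c_\perp$ with $c_V \in V$, $c_\perp \in V^\perp$. Fix an orthonormal basis $e_1, \ldots, e_m$ of $V$, giving an isometry $\iota : V \to \mathbb{R}^m$. The key observation is that every data point $x \in X$ lies in $x_0 + V$, so its component orthogonal to $V$ is exactly the same as that of $x_0$; hence the only way a center $c$ can ``see'' data points is through $c_V$ and the scalar $\|c_\perp - (x_0)_\perp\|$. (After translating $x_0$ to the origin we may simply take $(x_0)_\perp = 0$, so this scalar is just $\|c_\perp\|$.)

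First I would define the map explicitly. After translating so $x_0 = 0$, set
\begin{align*}
  f(c) := \bigl(\iota(c_V),\ \|c_\perp\|\bigr) \in \mathbb{R}^m \times \mathbb{R} \subseteq \mathbb{R}^{n+1},
\end{align*}
padding with zeros if $m < n$ to land in $\mathbb{R}^{n+1}$. For a data point $x \in X$ we have $x \in V$, so $x_\perp = 0$ and $f(x) = (\iota(x), 0)$. Then I would verify the isometry claim by a direct Pythagorean computation: for $x \in X$ and $c \in \calH$,
\begin{align*}
  \|x - c\|^2 = \|x - c_V\|^2 + \|c_\perp\|^2 = \|\iota(x) - \iota(c_V)\|^2 + \|c_\perp\|^2 = \|f(x) - f(c)\|^2,
\end{align*}
where the first equality uses $x - c_V \in V \perp c_\perp$, the second uses that $\iota$ is an isometry on $V$, and the third is the definition of the Euclidean norm on $\mathbb{R}^m \times \mathbb{R}$. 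Undoing the initial translation (which is an isometry of $\calH$ and does not affect differences $\|x-c\|$) completes the argument.

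There is essentially no hard obstacle here; the only subtlety worth stating carefully is that $f$ is \emph{not} linear — the last coordinate $\|c_\perp\|$ is a norm, not a linear functional — and that this is both necessary and harmless. It is necessary because a genuinely linear isometric embedding of all of $\calH$ into $\mathbb{R}^{n+1}$ cannot exist when $\calH$ is infinite-dimensional; it is harmless because the lemma only asks to preserve distances of the special form $\|x - c\|$ with $x \in X$, and all such distances depend on $c_\perp$ only through its norm. I would include one sentence making explicit that the decomposition $c = c_V + c_\perp$ is well defined (orthogonal projection onto a finite-dimensional, hence closed, subspace of a Hilbert space) so that $f$ is well defined, and that $m \le n$ guarantees the target dimension bound $n+1$. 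Everything else is the routine computation above.
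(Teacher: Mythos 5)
Your proof is correct and follows essentially the same route as the paper: decompose each center $c$ orthogonally with respect to the (finite-dimensional) span of the data, keep the in-span component via an isometric identification with Euclidean space, and record the orthogonal part only through its norm, with the Pythagorean identity giving the distance preservation. The initial translation by $x_0$ and the explicit orthonormal basis $\iota$ are minor cosmetic variations of the paper's argument, which works directly with $\Span(X)$.
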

\begin{proof}
  Let $\mathcal{S} = \Span(X)$.
  Then every point $c \in \calH$ can be written (uniquely) as 
  $c = c^{\|} + c^{\perp}$,
  where $c^{\|} \in \mathcal{S}$ and $c^{\perp}$ is orthogonal to $\mathcal{S}$.
  Thus, $\|c\|^2 = \|c^{\|}\|^2 + \|c^{\perp}\|^2$.
  Note that for all $x\in X$, we have $x^{\perp} = 0$.
Now, for every $c \in \calH$,
  let $f(c) := ( c^{\|} \ ; \|c^{\perp}\| )$,
  where we interpret $x^{\|}$ as an $n$-dimensional vector.
Then for all $x\in X$ and $c \in \calH$,
    \begin{align*}
        \|x - c\|^2
        &= \|x^{\|} - c^{\|}\|^2 + 
            \|x^{\perp} - c^{\perp}\|^2 \\
        &= \|x^{\|} - c^{\|}\|^2 + 
            \|c^{\perp}\|^2 \\
        &= \|f(x) - f(c)\|^2.
    \end{align*}
    The claim follows.
\end{proof}

\begin{corollary}
\label{cor:nplus1}
Consider $n$-point weighted dataset $X$,
kernel function $K : X \times X \to \mathbb{R}$,
$z \geq 1$, integer $k \geq 1$, and $0 < \epsilon < 1$.
Suppose that a reweighted subset $S \subseteq X$ satisfies that
for every $\varphi : X \to \mathbb{R}^{n+1}$
such that for all $x, y \in X$, $\langle \varphi(x), \varphi(y) \rangle = K(x, y)$, for all $C \subseteq \mathbb{R}^{n + 1}$ with $\vert C\vert = k$
the following holds
\begin{align*}
\cost_z^\varphi(S, C) \in (1 \pm \epsilon) \cdot \cost_z^\varphi(X, C) .
\end{align*}
Then $S$ is an $\epsilon$-coreset for kernel \kzC with respect to $X$ and kernel $K$.
\end{corollary}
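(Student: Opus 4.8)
The plan is to reduce the general (possibly infinite-dimensional) setting to the $(n+1)$-dimensional setting covered by the hypothesis, using the isometric embedding of Lemma~\ref{lemma:nplus1}. Fix an arbitrary Hilbert space $\calH$ and feature map $\varphi : X \to \calH$ realizing $K$, and an arbitrary center set $C \subseteq \calH$ with $\vert C\vert = k$; the goal is to show $\cost_z^\varphi(S, C) \in (1\pm\epsilon)\cdot\cost_z^\varphi(X, C)$. First I would apply Lemma~\ref{lemma:nplus1} to the Hilbert space $\calH$ and the point set $\varphi(X) \subseteq \calH$, which has at most $n$ distinct points, obtaining a map $f : \calH \to \mathbb{R}^{n+1}$ (pad with zero coordinates if $\|\varphi(X)\|_0 < n$) such that $\|\varphi(x) - c\| = \|f(\varphi(x)) - f(c)\|$ for every $x \in X$ and $c \in \calH$. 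Then set $\psi := f \circ \varphi : X \to \mathbb{R}^{n+1}$, which will be the feature map to which the hypothesis is applied.

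The first thing to verify is that $\psi$ is admissible for the hypothesis, i.e.\ $\langle \psi(x), \psi(y)\rangle = K(x,y)$ for all $x,y\in X$. Inspecting the construction in the proof of Lemma~\ref{lemma:nplus1}, $f$ restricted to $\mathcal{S} := \Span(\varphi(X))$ is a linear isometry onto a coordinate subspace of $\mathbb{R}^{n+1}$ (in particular $f(0)=0$), hence it preserves inner products on $\mathcal{S}$; since $\varphi(x),\varphi(y)\in\mathcal{S}$ this gives $\langle\psi(x),\psi(y)\rangle = \langle\varphi(x),\varphi(y)\rangle = K(x,y)$. Next, the distance-preservation property translates costs: for any subset $Y\subseteq X$ and any center set $C\subseteq\calH$ we have termwise $\min_{c\in C}\|\varphi(x)-c\| = \min_{c\in C}\|\psi(x)-f(c)\| = \min_{c'\in f(C)}\|\psi(x)-c'\|$, and therefore $\cost_z^\varphi(Y, C) = \cost_z^\psi(Y, f(C))$, where $f(C)\subseteq\mathbb{R}^{n+1}$ has at most $k$ points. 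Applying this identity with $Y=S$ and with $Y=X$, it remains only to relate $\cost_z^\psi(S, f(C))$ and $\cost_z^\psi(X, f(C))$.

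If $\vert f(C)\vert = k$, the hypothesis directly yields $\cost_z^\psi(S, f(C)) \in (1\pm\epsilon)\cdot\cost_z^\psi(X, f(C))$, and chaining this with the two cost identities completes the argument; since $\calH,\varphi,C$ were arbitrary, $S$ is then an $\epsilon$-coreset for kernel \kzC. The one point requiring mild care is that $f$ need not be injective on $C$, so possibly $\vert f(C)\vert < k$; I would handle this by a routine limiting argument, approximating $f(C)$ by $k$-point sets $C_\delta \supseteq f(C)$ whose extra points cluster around a fixed point of $f(C)$, invoking the hypothesis for each $C_\delta$, and letting $\delta\to 0$, using that each $\cost_z^\psi(\cdot,\cdot)$ is a finite sum continuous in the center positions so the two-sided bound survives the limit. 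Overall this is largely an exercise in unwinding definitions; the only genuinely non-mechanical step is the quantifier bookkeeping, namely ensuring the induced map $\psi$ still realizes $K$ so that the hypothesis truly applies, which is exactly what the linear-isometry property of $f$ on $\Span(\varphi(X))$ provides.
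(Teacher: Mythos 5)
Your proposal is correct and follows essentially the same route as the paper's proof: apply Lemma~\ref{lemma:nplus1} to obtain $f$, observe $\cost_z^{\varphi}(Q, C) = \cost_z^{f\circ\varphi}(Q, f(C))$ for $Q\in\{S,X\}$, and invoke the hypothesis with the map $f\circ\varphi$ and center set $f(C)$. Your additional checks---that $f\circ\varphi$ indeed realizes $K$ (via the linear-isometry property of $f$ on $\Span(\varphi(X))$) and the perturbation argument for the case $\vert f(C)\vert < k$---are sound and fill in details the paper leaves implicit.
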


\begin{proof}
To verify that $S$ is a coreset with respect to $X$ and $K$,
consider some feature space $\calH'$ and feature map $\varphi'$ be induced by $K$.
Apply Lemma~\ref{lemma:nplus1} to obtain $f : \calH' \to \mathbb{R}^{n+1}$,
then for all $C \subseteq \calH'$, $\vert C\vert = k$, we have
$\forall Q\subseteq X, 
\cost_z^{\varphi'}(Q, C) = \cost_z^{f \circ \varphi'}(Q, f(C)),$
and using the promise about $S$ with $\varphi = f \circ \varphi'$,
\begin{align*}
  \cost_z^{\varphi'}(S, C)
  &= \cost_z^{\varphi}(S, f(C)) \\
  &\in (1 \pm \epsilon) \cdot \cost_{z}^{\varphi}(X, f(C))  \\
  &\in (1 \pm \epsilon) \cdot \cost_{z}^{\varphi'}(X, C).
\end{align*}
Thus, $S$ is indeed a coreset with respect to $X$ and $K$.
\end{proof}

Another issue is that some of the existing algorithms, such as~\cite{DBLP:conf/focs/SohlerW18,DBLP:journals/siamcomp/FeldmanSS20},
require the explicit representations of points in $\varphi(X)$,
which is very expensive to compute.
Fortunately, the importane-sampling-based algorithms
of~\cite{DBLP:conf/stoc/HuangV20} and~\cite{DBLP:conf/soda/BravermanJKW21}
are oblivious to the representation of $\varphi$,
and only rely on a distance oracle that evaluates $\forall x, y \in X$, $\|\varphi(x) - \varphi(y)\| = \sqrt{K(x, x) + K(y, y) - 2K(x, y)}$.
Now, by Corollary~\ref{cor:nplus1}, executing these algorithms
without any modifications (except for plugging in the distance oracle defined by kernel $K$) yields the desired coreset for kernel \kzC.
We choose to use the coreset construction of~\cite{DBLP:conf/soda/BravermanJKW21}, 
which is arguably simpler. 
We now recall its statement for completeness. 

\begin{theorem}[\cite{DBLP:conf/soda/BravermanJKW21}]
    \label{thm:euc_coreset}
    Given $n$-point weighted dataset $X\subset\mathbb{R}^m$ for some interger $m$, 
    together with $z \geq 1$, integer $k \geq 1$, and $0 < \epsilon < 1$, 
    one can construct in time $\tilde{O}(mnk)$
    a reweighted subset $S\subseteq X$ of size $\|S\|_0 = \tilde{O}(\epsilon^{-4}2^{2z}k^2)$,
    that with high constant probability is an $\epsilon$-coreset for \kzC with respect to $X$.
If the explicit representation of $X\subset \mathbb{R}^m$
      is replaced by oracle access to the distance function $\dist : X \times X \to \mathbb{R}_+$,
      then the time bound improves to $\tilde{O}(nk)$.
\end{theorem}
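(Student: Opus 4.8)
The plan is to prove Theorem~\ref{thm:euc_coreset} via the sensitivity (importance-sampling) framework of Feldman--Langberg, instantiated so that both the sample size and the running time are free of the ambient dimension $m$. First I would compute, in near-linear time, a bicriteria constant-factor approximation: a set $A$ of $O(k)$ points with $\cost_z(X,A) = O(1)\cdot\OPT_z(X)$, e.g.\ by $D^z$-sampling (the $(k,z)$-analogue of \kMeanspp) after a preliminary Johnson--Lindenstrauss projection of $X$ to $O(\log n)$ coordinates so that each sampling round costs only $\tilde O(k)$ work per point. Given $A$, for $x\in X$ let $a(x)$ be its closest point of $A$ and $X_a:=\{x:a(x)=a\}$, and assign the sensitivity upper bound $\sigma(x)\;\propto\;\dfrac{w_X(x)\,\dist(x,A)^z}{\cost_z(X,A)}\;+\;\dfrac{w_X(x)}{w_X(X_{a(x)})}$. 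Using the $z$-relaxed triangle inequality $(u+v)^z\le 2^{z-1}(u^z+v^z)$ applied twice, one shows $\mathfrak S:=\sum_x\sigma(x)=2^{O(z)}k$. The coreset $S$ is then $N$ i.i.d.\ draws from $\sigma(\cdot)/\mathfrak S$, each reweighted by $\mathfrak S/(N\sigma(x))$; routine manipulations reduce correctness of $S$ (i.e.\ \eqref{eqn:coreset} with identity feature map and general $z$) to a single uniform deviation bound: with probability $1-\delta$, $\tfrac1N\sum_{i=1}^N g_C(x_i)\in(1\pm\epsilon)\,\E[g_C]$ simultaneously over all $k$-point center sets $C$, where $g_C(x):=w_X(x)\dist(x,C)^z/\sigma(x)$ ranges over a bounded function family.

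The crux — and what makes the $\tilde O(k^2\epsilon^{-4})$ bound possible — is establishing this uniform deviation bound with $N$ independent of $m$. The pseudo-dimension of $\{g_C\}$ in $\mathbb R^m$ is $\Theta(km)$, so a naive application is dimension-dependent; I would remove this by a dimension reduction carried out \emph{only in the analysis}. Concretely, I would first lift $X$ into $\Span(X)$ augmented by one extra coordinate recording the distance of a center to $\Span(X)$, exactly as in Lemma~\ref{lemma:nplus1}, so that $\dist(x,c)$ becomes a function of an $(n+1)$-dimensional image; this is precisely what allows a map applied to $X$ to control the cost for \emph{all} centers $C\subseteq\mathbb R^m$, not merely those in $\Span(X)$. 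I would then apply an oblivious Johnson--Lindenstrauss map $\pi$ from that $(n+1)$-dimensional space into $\mathbb R^{t}$ with $t=\tilde O(2^{O(z)}\epsilon^{-2})$, and prove that $\pi$ preserves $\cost_z(Q,C)$ within $1\pm\epsilon/3$ for $Q\in\{X,S\}$ and every $k$-point $C$ (the $z\in\{1,2\}$ cases are classical Makarychev--Makarychev--Razenshteyn-type statements, and the general $z$ case follows by the same moment/net argument with the $z$-th power inequalities, absorbing $2^{O(z)}$ factors into $t$). In the $t$-dimensional image the pseudo-dimension of $\{g_C\}$ is $\tilde O(kt)=\tilde O(2^{O(z)}k\epsilon^{-2})$, so the Feldman--Langberg bound $N=\tilde O\big(\mathfrak S\cdot(\mathrm{pdim}+\log\delta^{-1})\cdot\epsilon^{-2}\big)$ gives $\|S\|_0=N=\tilde O(2^{2z}k^2\epsilon^{-4})$ after careful accounting of the $z$-dependent constants; transferring the deviation bound back to $\mathbb R^m$ uses the distortion-free $(n+1)$-embedding together with $\pi$, splitting the error budget as $(1\pm\epsilon/3)^{O(1)}\subseteq(1\pm\epsilon)$.

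Finally I would check the running time: the bicriteria step, computing all $a(x)$, the cluster weights $w_X(X_a)$ and hence all $\sigma(x)$, and drawing the $N=\poly(k\epsilon^{-1})\ll n$ samples with their weights, are each $\tilde O(nk)$; the dimension reduction $\pi$ is never executed — it lives entirely in the concentration analysis — so it contributes nothing to the runtime (a sparse JL map would keep it $\tilde O(n)$ in any case). The points I expect to consume most of the effort are: (i) proving the cost-preserving projection for \emph{general} $z\ge 1$ and in the strong form ``preserves $\cost_z$ of the weighted set for every center set'' rather than merely ``preserves pairwise distances''; (ii) the pull-back argument, i.e.\ certifying that a coreset verified in the $t$-dimensional image is automatically an $\epsilon$-coreset against all centers in the original $\mathbb R^m$, which is exactly where the extra ``$+1$'' coordinate of Lemma~\ref{lemma:nplus1} is indispensable; and (iii) tightening the total sensitivity to $2^{O(z)}k$ with no $\log n$ factor, which requires the two-stage split of $\sigma$ into a cost term and a cardinality term and a careful charging of the points in each distance ring around $A$.
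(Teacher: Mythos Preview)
The paper does not prove Theorem~\ref{thm:euc_coreset}; it is quoted as a black box from \cite{DBLP:conf/soda/BravermanJKW21} (note the attribution in the theorem header) and is invoked in a single line to derive Theorem~\ref{thm:main} from Corollary~\ref{cor:nplus1}. The only related material in the paper is the \emph{description} of the sampling procedure in Algorithms~\ref{alg:main}--\ref{alg:dz_sampling}, with no correctness analysis. So there is nothing here to compare your argument against.

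That said, your sketch is a plausible reconstruction of how the cited result is obtained: bicriteria seed via $D^z$-sampling, the two-term sensitivity bound yielding total sensitivity $2^{O(z)}k$, and a dimension-reduction step carried out only in the analysis to replace the $\Theta(km)$ pseudo-dimension by $\tilde O(k\epsilon^{-2})$. One structural remark: you invoke Lemma~\ref{lemma:nplus1} \emph{inside} the proof of the Euclidean coreset theorem, as a preprocessing step before the JL projection. In the present paper that lemma plays the opposite role --- it sits \emph{outside} Theorem~\ref{thm:euc_coreset}, reducing the Hilbert-space statement (Theorem~\ref{thm:main}) to the finite-dimensional one --- and the dimension-independence of the coreset size in \cite{DBLP:conf/soda/BravermanJKW21} is established by their own chaining/terminal-embedding analysis rather than by first passing through $\mathbb{R}^{n+1}$. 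Your route is not wrong, but the step you flag as ``(i)'' --- a cost-preserving oblivious JL for general $z\ge 1$ in the strong ``for every center set'' form --- is doing essentially all the work, and is itself a result of comparable depth to the theorem you are proving; be aware that this is where the real content lies, not in the sampling bookkeeping.
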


\begin{proof}[Proof of Theorem~\ref{thm:main}]
Suppose $f$ is the asserted map from Lemma~\ref{lemma:nplus1}.
    Hence, $f(\varphi(X)) = \{ f(\varphi(x)) : x \in X \}$ is a subset of $\mathbb{R}^{n + 1}$.
    Apply Theorem~\ref{thm:euc_coreset} on top of this $f(\varphi(X))$,
    then by the guarantee of Theorem~\ref{thm:euc_coreset} and Corollary~\ref{cor:nplus1},
    the resulting weighted set is an $\epsilon$-coreset for $X$.

To show that the running time is $\tilde{O}(nk)$,
    it suffices to verify that the distance between a pair of points in $f(\varphi(X))$ can be evaluated in $O(1)$ time (as required by Theorem~\ref{thm:euc_coreset}).
    Indeed, by Lemma~\ref{lemma:nplus1},
    for all $x, y \in X$,
    one can evaluate $\dist(f(\varphi(x)), f(\varphi(y))) = \dist(\varphi(x), \varphi(y))$
    in $O(1)$ time using $3$ queries to the kernel function.
\end{proof}

\subsection{Description of Coreset Construction Algorithms}

For completeness, we present our full algorithm in Algorithm~\ref{alg:main}
(which depends on the subroutines defined in Algorithm~\ref{alg:imp_sampling} and~\ref{alg:dz_sampling}).
As can be easily seen in the proof of Theorem~\ref{thm:main},
our algorithm is a black-box application of Theorem~\ref{thm:euc_coreset},
i.e., it is identical to the algorithm of~\cite{DBLP:conf/soda/BravermanJKW21},
except that we use the kernel distance (instead of Euclidean distance),
which can be computed efficiently via the kernel trick.

The following notation is needed.
For a subset $C \subseteq \calH$ and data point $x \in X$, 
define $\NN_C(x) := \arg\min\{ \dist(\varphi(x), y) : y \in C \}$
as the nearest neighbor of $x$ in $C$ with respect to the distances in the feature space (breaking ties arbitrarily).
Thus $\NN_C(\cdot)$ defines a $\vert C\vert$-partition of $X$,
and the cluster that $x$ belongs to (with respect to $C$) is denoted 
$C(x) := \{ x' \in X : \NN_C(x') = \NN_C(x) \}$.

Algorithm~\ref{alg:main} is the main procedure for constructing the coreset,
and it iteratively executes another importance-sampling-based
coreset construction (Algorithm~\ref{alg:imp_sampling}).
Informally, each invocation of \textsc{Importance-Sampling} constructs 
a coreset $X_i$ from the current coreset $X_{i-1}$,
to reduce the number of distinct elements in $X_{i-1}$ to roughly $\log{\|X_{i-1}\|_0}$.
The procedure ends when such size reduction cannot be done any more,
at which point the size of the coreset reaches the bound in Theorem~\ref{thm:euc_coreset}, which is independent of $n$.

\begin{algorithm}[t]
    \caption{Constructing $\epsilon$-coreset for kernel \kzC on dataset $X$ with kernel $K$}
    \label{alg:main}
    \begin{algorithmic}[1]
\State let $X_0 \gets X$, $i \gets 0$
            \Repeat
                \State let $i \gets i + 1$ and $\epsilon_i \gets \epsilon / (\log^{(i)}{\|X\|_0})^{1/4}$
                \State $X_i \gets \textsc{Importance-Sampling}(X_{i-1}, \epsilon_i)$
            \Until{$\|X_i\|_0$ does not decrease compared to $\|X_{i-1}\|_0}$
            \\
\Return $X_i$
\end{algorithmic}
\end{algorithm}

In fact, subroutine \textsc{Importance-Sampling} already constructs a coreset,
albeit of a worse size that depends on $\log{\|X\|_0}$.
This subroutine is based on the well-known importance sampling approach
that was proposed and improved in a series of works, e.g., ~\cite{DBLP:conf/soda/LangbergS10,DBLP:conf/stoc/FeldmanL11,DBLP:journals/siamcomp/FeldmanSS20}.
Its first step is to compute an importance score $\sigma_x$
for every $x\in X$ (lines 1--2), and then draw independent samples from $X$
with probability proportional to $\sigma_x$ (lines 3--4).
The final coreset is formed by reweighting the sampled points (lines 5--6).
Roughly speaking, the importance score $\sigma_x$ measures
the relative contribution of $x$ to the objective function in the worst-case,
which here means the maximum over all choices of the center set.
It can be computed from an $O(\log k)$-approximate solution for kernel \kzC on $X$,
which is obtained by the $D^z$-sampling subroutine (Algorithm~\ref{alg:dz_sampling}),
a natural generalization of the $D^2$-sampling introduced by~\cite{DBLP:conf/soda/ArthurV07}.

We stress that our algorithm description uses the feature vectors $\varphi(x)$
for clarity of exposition.
These vectors do not have to be provided explicitly,
because only distances between them are required, 
and thus all steps can be easily implemented using the kernel trick,
and the total time (and number of accesses to the kernel function $K$) 
is only $\tilde{O}(nk)$.

\begin{algorithm}[t]
    \caption{\textsc{Importance-Sampling}($X$, $\epsilon$)}
    \label{alg:imp_sampling}
    \begin{algorithmic}[1]
\State let $C^\star \gets \textsc{$D^z$-Sampling}(X)$
            \State $\forall x\in X$, let $\sigma_x \gets 
            \frac{w_X(x) \cdot (\dist(x, C^\star))^z}{\cost_z^\varphi(X, C^\star)}
            + \frac{w_X(x)}{w_X(C^\star(x))}
            $ 
            \State $\forall x\in X$, let $p_x \gets \frac{\sigma_x}{\sum_{y \in X}{\sigma_y}}$
            \State draw $N = O(\epsilon^{-4} 2^{2z} z k^2 \log^2 k \log{\|X\|_0})$
            i.i.d. samples from $X$, using probabilities $(p_x)_{x \in X}$ 
            \State let $D$ be the sampled set; $\forall x \in D$ let $w_D(x) \gets \frac{w_X(x)}{p_x N}$
            \\
\Return weighted set $D$
\end{algorithmic}
\end{algorithm}

\begin{algorithm}[t]
    \caption{\textsc{$D^z$-Sampling}($X$)}
    \label{alg:dz_sampling}
    \begin{algorithmic}[1]
\State draw $x \in X$ uniformly at random, and initialize $C \gets \{\varphi(x)\}$ 
            
\Comment{
            we use here the feature vectors for clarity, and the implementation should represent $\varphi(x)$ by storing $x$ 
            }
\For{$i = 1, \ldots, k-1$}
                \State draw one sample $x \in X$, using probabilities
                $w_X(x) \cdot \frac{(\dist(\varphi(x), C))^z}{\cost_z^\varphi(X, C)}$
                \State let $C \gets C \cup \{ \varphi(x) \}$ \EndFor
            \\
\Return $C$
\end{algorithmic}
\end{algorithm}

 \section{Experiments}
\label{sec:exp}
We validate the empirical performance of our coreset for kernel \kMeans
on various datasets,
and show that our coresets significantly speed up a kernelized 
version of the widely used \kMeanspp~\cite{DBLP:conf/soda/ArthurV07}.
In addition, we apply this new coreset-based kernelized \kMeanspp to spectral clustering (via a reduction devised by~\cite{DBLP:conf/kdd/DhillonGK04}),
showing that it outperforms the well-known \sklearn solver in both speed and accuracy.

\paragraph{Experimental setup.}
Our experiments are conducted on standard clustering datasets that consist of vectors in $\mathbb{R}^d$,
and we use the RBF kernel (radial basis function kernel, also known as Gaussian kernel) and polynomial kernels as kernel functions.
An RBF kernel $K_G$ is of the form $K_G(x, y) := \exp\left( - \frac{\| x - y \|}{2\sigma^2} \right)$, where $\sigma>0$ is a parameter,
and a polynomial kernel $K_P$ is of the form $K_P(x, y) := (\langle x, y \rangle + c)^d$ where $c$ and $d$ are parameters.
Table~\ref{tab:data} summarizes the specifications of datasets and our choice of the parameters for the kernel function.
We note that the parameters are dataset-dependent, and that for Twitter and Census1990
dataset we subsample to $10^5$ points since otherwise it takes too long to run for some of our inefficient baselines.
Unless otherwise specified, we use a typical value $k = 5$ for the number of clusters.
All experiments are conducted on a PC with Intel Core i7 CPU and 16 GB memory,
and algorithms are implemented using C++.

\begin{table}[t]
    \caption{Specifications of datasets}
    \label{tab:data}
    \centering
    \small
    %\resizebox{\textwidth}{!}{
    \begin{tabular}{llll}
        \toprule
        dataset & size & RBF kernel param. & poly. kernel param. \\
        \midrule
        Twitter~\cite{twitter_data} & 21040936 & $\sigma = 50$ & $c = 0, d = 4$ \\
Census1990~\cite{census1990_data} & 2458284 & $\sigma = 100$ & $c = 0, d = 4$ \\
Adult ~\cite{adult_data} & 48842 & $\sigma = 200000$ & $c = 0, d = 2$ \\
Bank ~\cite{bank_data} & 41188 & $\sigma = 500$ & $c = 0, d = 4$ \\
        \bottomrule
    \end{tabular}
    %}
\end{table}

\subsection{Size and Empirical Error Tradeoff}
Our first experiment evaluates the empirical error versus coreset size.
In our coreset implementation, we simplify the construction in Algorithm~\ref{alg:main}
by running the importance sampling step only once instead of running it iteratively,
and it turns out this simplification still achieves excellent performance.
As in many previous implementations, instead of setting $\epsilon$ and solving
for the number of samples $N$ in the \textsc{Importance-Sampling} procedure (Algorithm~\ref{alg:imp_sampling}), 
we simply set $N$ as a parameter to directly control the coreset size. 
We construct the coreset with this $N$
and evaluate its error by drawing 500 random center sets $\mathcal{C}$
(each consisting of $k$ points) from the dataset,
and evaluate the maximum empirical error, defined as 
\begin{equation}
    \hat{\epsilon} := \max_{C \in \mathcal{C}}\frac{\vert \cost^\varphi(S, C) - \cost^\varphi(X, C)\vert}{\cost^\varphi(X, C)}.
\end{equation}
This error is measured similarly as in the definition of coreset,
except that it is performed on a sample of center sets.
To make the measurement stable,
we independently evaluate the empirical error 100 times and report the average.

The tradeoff between the coreset size and empirical error is shown in Figure~\ref{fig:size_vs_error}, where we also compare with a baseline that constructs coresets by uniform sampling.
These experiments show that our coreset performs consistently well on all datasets and kernels.
Furthermore, our coreset admits a similar error curve regardless of dataset and
kernel function
-- for example, one can get $\leq 10\%$ error using a coreset of $1000$ points, 
--
which is perfectly justified by our theory that the size of $\epsilon$-coreset only depends on $\epsilon$ and $k$.
Comparing with the uniform-sampling baseline,
our coreset generally has superior performance,
especially when the coreset size is small.
We also observe that the uniform sampling suffers a larger variance compared with our coreset.

\begin{figure}[t]
    \centering
    \begin{subfigure}[b]{0.39\textwidth}
        \centering
        \includegraphics[width=\textwidth]{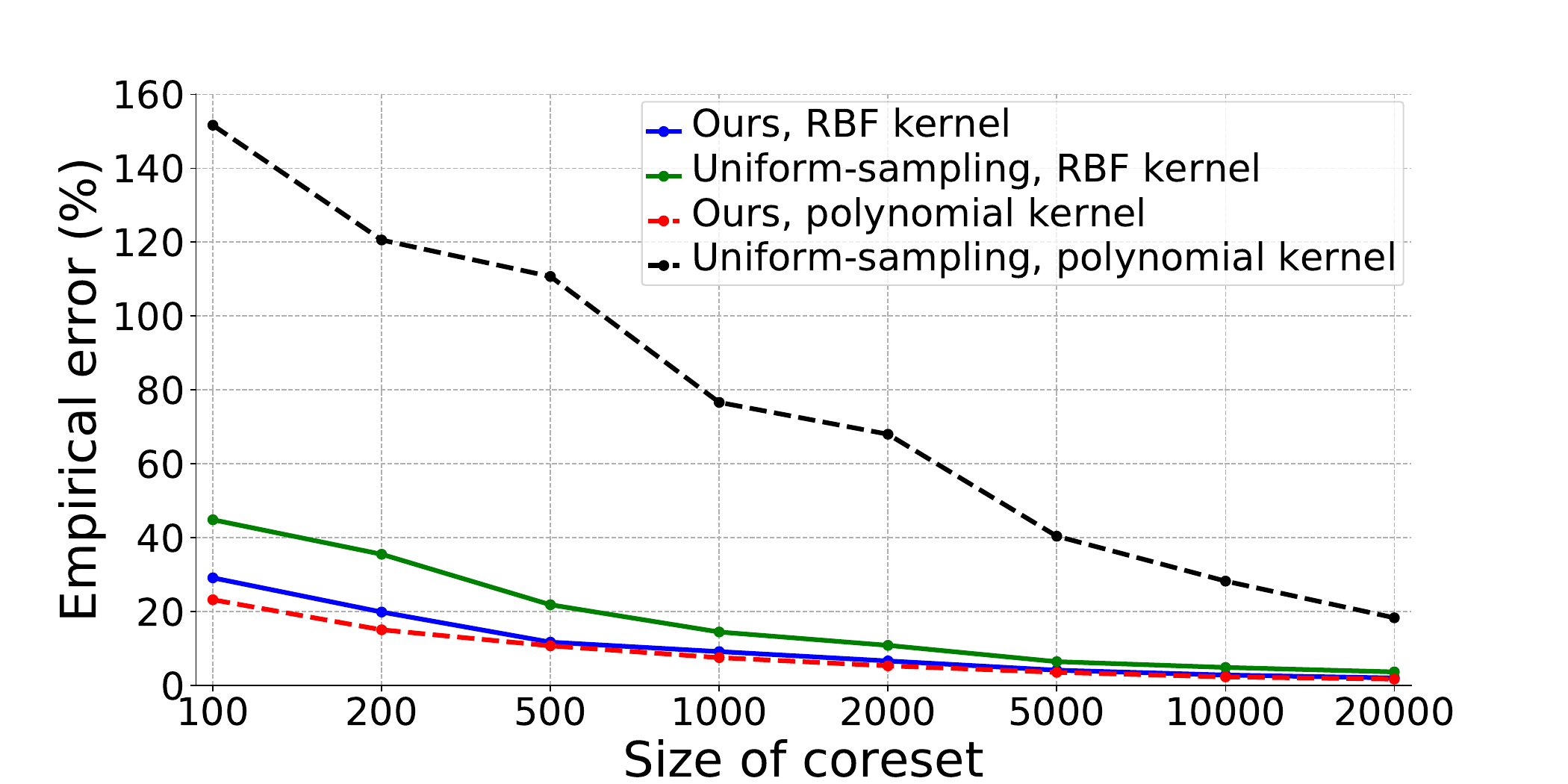}
        \caption*{Adult dataset}
    \end{subfigure} \qquad
    \begin{subfigure}[b]{0.39\textwidth}
        \centering
        \includegraphics[width=\textwidth]{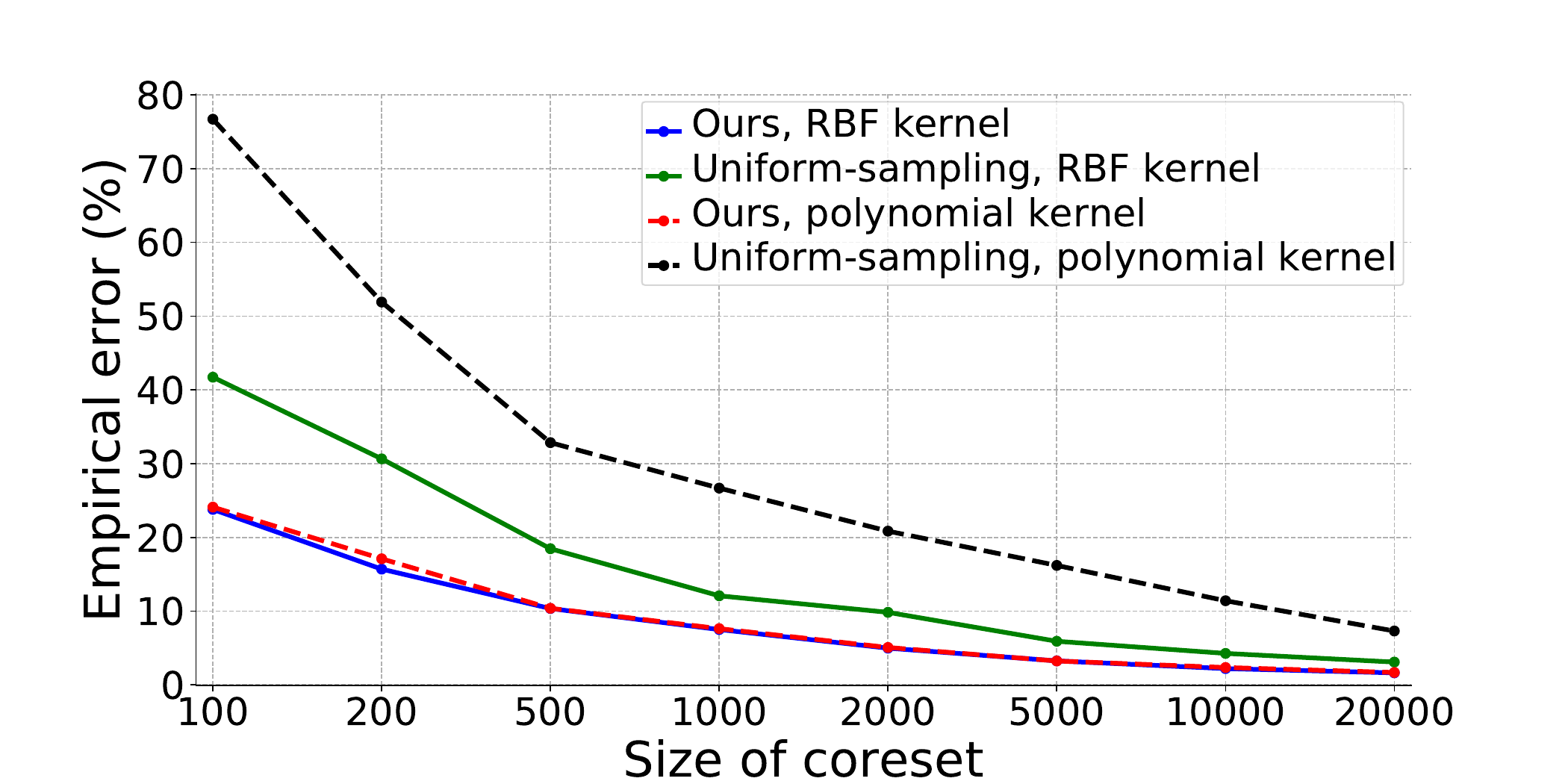}
        \caption*{Bank dataset}
    \end{subfigure}
    \begin{subfigure}[b]{0.39\textwidth}
        \centering
        \includegraphics[width=\textwidth]{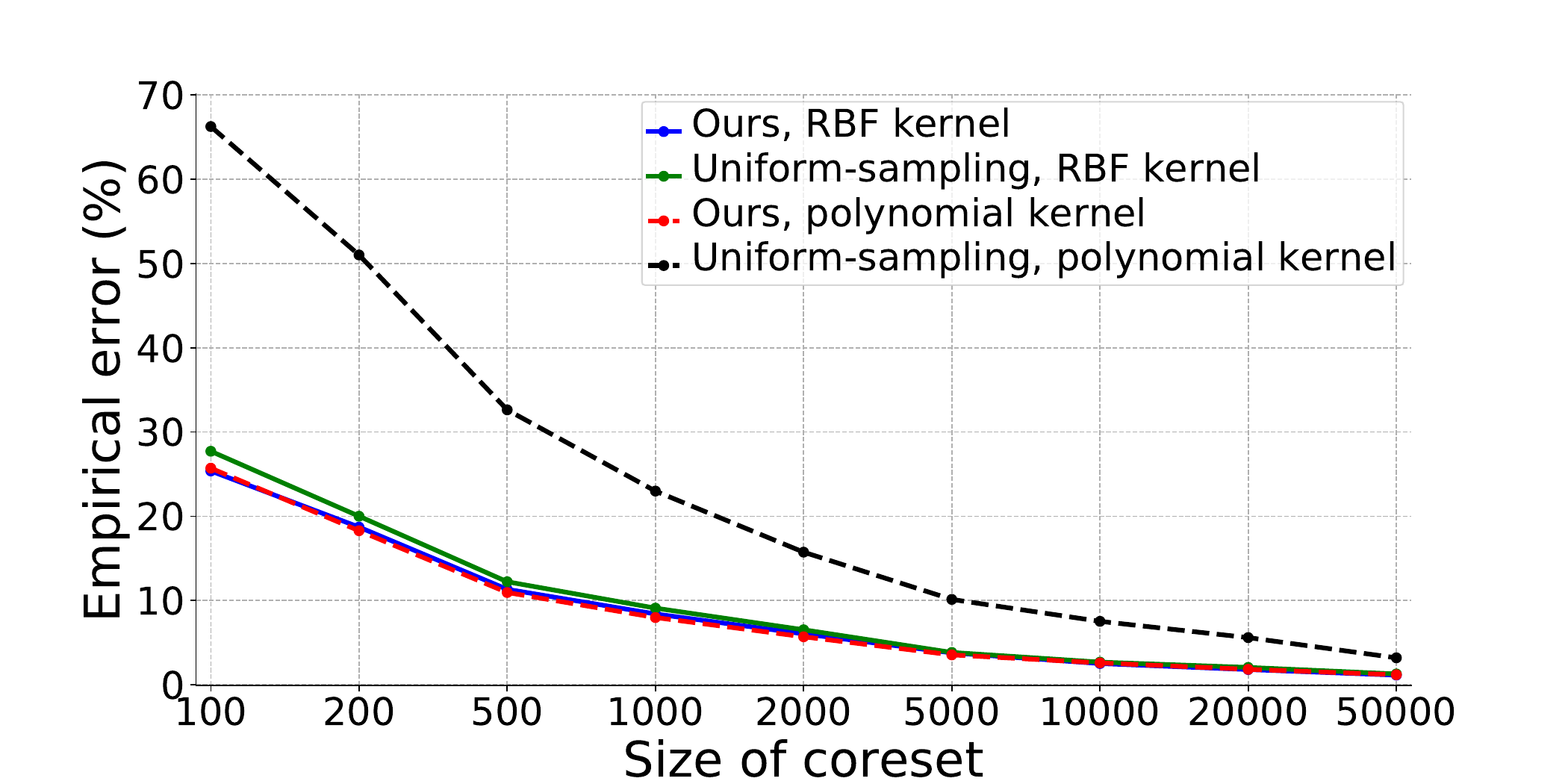}
        \caption*{Twitter dataset}
    \end{subfigure} \qquad
    \begin{subfigure}[b]{0.39\textwidth}
        \centering
        \includegraphics[width=\textwidth]{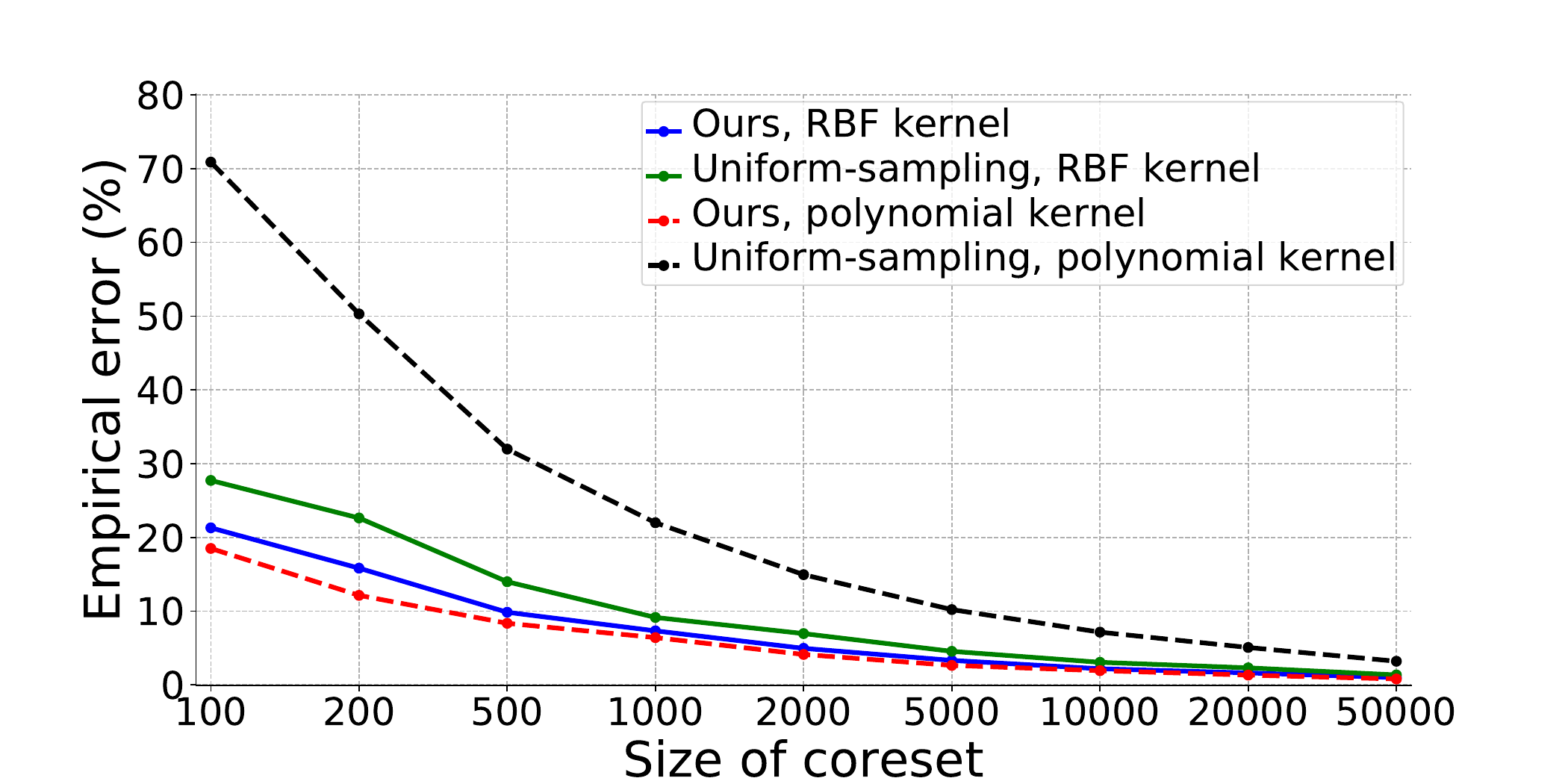}
        \caption*{Census1990 dataset}
    \end{subfigure}
    \caption{Tradeoffs between coreset size and empirical error.}
    \label{fig:size_vs_error}
\end{figure}

\subsection{Speeding Up Kernelized \kMeanspp}
\kMeanspp~\cite{DBLP:conf/soda/ArthurV07} is a widely-used algorithm for \kMeans,
and it could easily be adapted to solve kernel \kMeans.
In particular, \emph{kernelized} \kMeanspp is the \kMeanspp algorithm applied on the feature space $\calH$,
where the distances are the kernel distances $\dist(\varphi(x), \varphi(y))$.
This kernel distance between a pair of points may be evaluated
using the kernel trick efficiently,
however, as mentioned earlier,
implementing \kMeanspp using the kernel trick can take $\Omega(n^2)$ time in total.
We use our coreset to speed up this kernelized \kMeanspp algorithm, 
by first computing the coreset and then running kernelized \kMeanspp on top of it;
this yields an implementation of kernelized \kMeanspp with near-linear 
running time. 

In Figure~\ref{fig:kmeanspp}, we demonstrate the running time and
the error achieved by kernelized \kMeanspp with and without coresets, 
experimented with varying coreset sizes.
We measure the relative error of our coreset-based kernelized \kMeanspp
by comparing the objective value it achieves
with that of vanilla (i.e., without coreset) kernelized \kMeanspp.
These experiments show that the error decreases significantly as the coreset size increases,
and it stabilizes around size $N=100$, achieving merely $<5\%$ error.
Naturally, the running time of our coreset-based approach increases with the coreset size,
but even the slowest one is still several orders of magnitude faster than vanilla kernelized \kMeanspp.

\begin{figure}[t]
    \centering
    \begin{subfigure}[b]{0.39\textwidth}
        \centering
        \includegraphics[width=\textwidth]{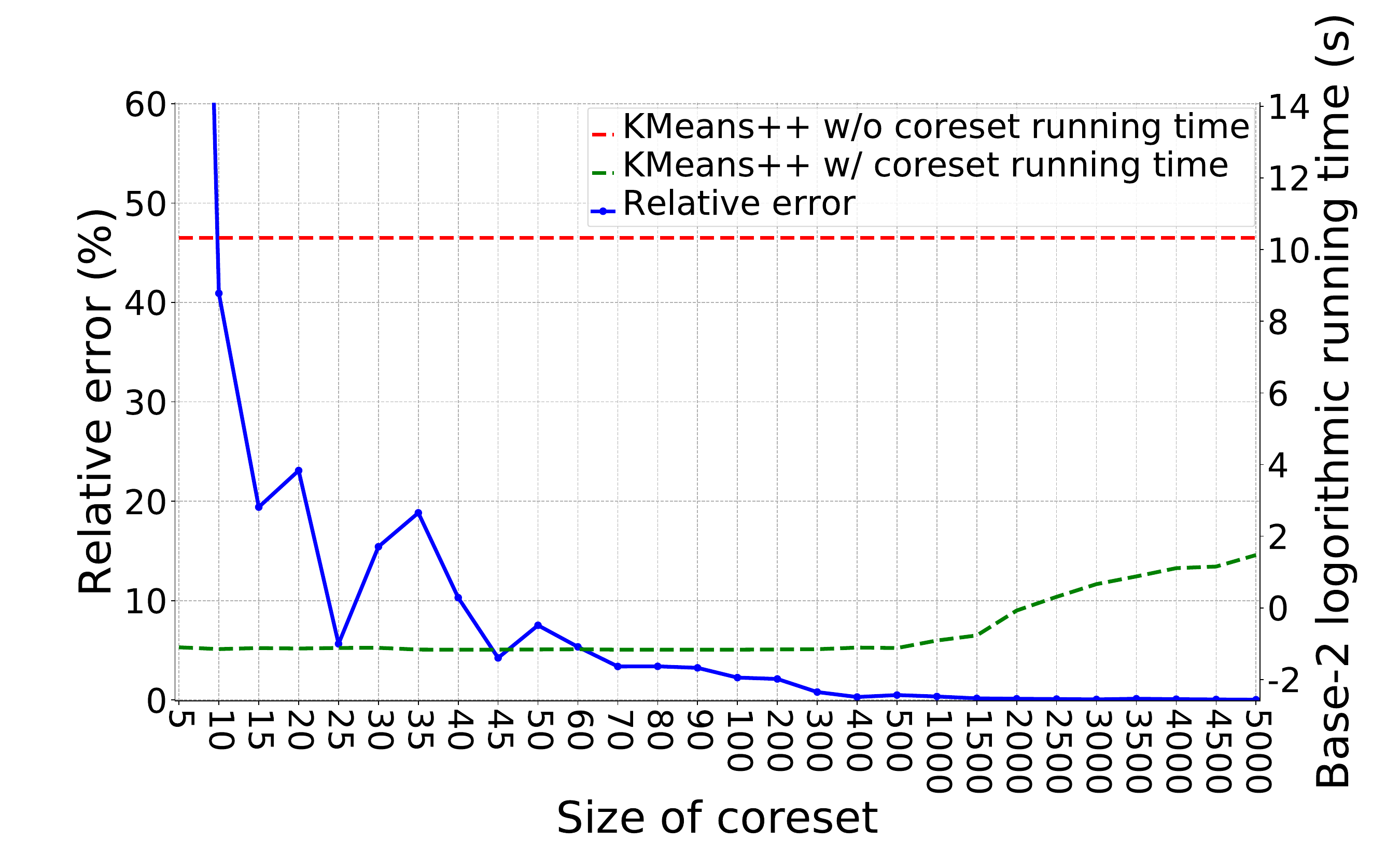}
        \caption*{RBF kernel}
    \end{subfigure} \qquad
    \begin{subfigure}[b]{0.39\textwidth}
        \centering
        \includegraphics[width=\textwidth]{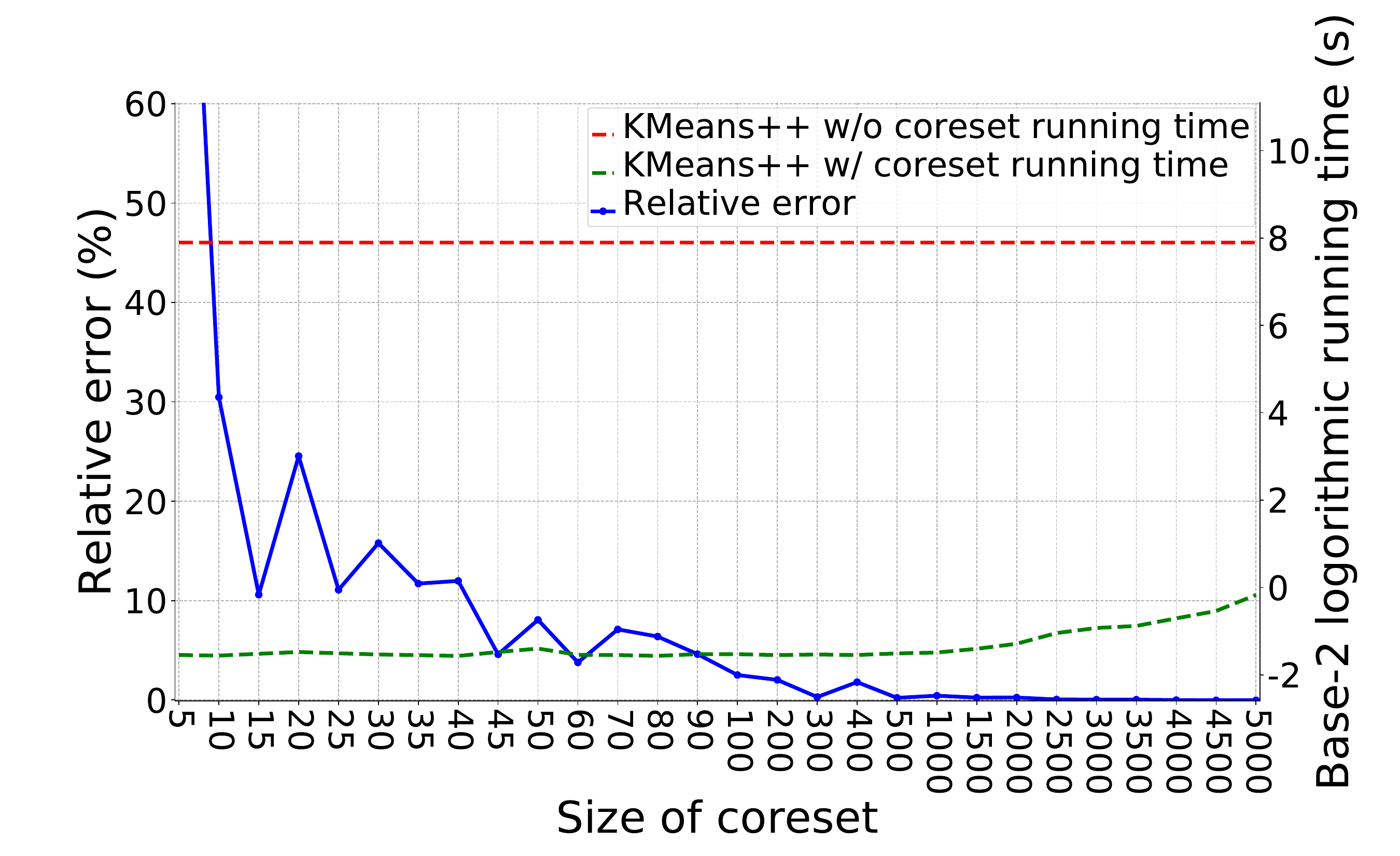}
        \caption*{Polynomial kernel}
    \end{subfigure}
    \caption{
        Speedup of kernelized \kMeanspp using our coreset.
    This experiment is conducted on the Twitter dataset with RBF and polynomial kernels.
We run each algorithm 10 times, and report the average running time and the minimum objective value (in relative-error evaluation). 
    }
    \label{fig:kmeanspp}
\end{figure}

\subsection{Speeding Up Spectral Clustering}
In the spectral clustering problem, the input is a set of $n$ objects $X$ and
an $n \times n$ matrix $A$ that measures the similarity between every pair of elements in $X$,
and the goal is to find a $k$-partition of $X$ such that a certain objective function with respect to $A$ is minimized.
\cite{DBLP:conf/kdd/DhillonGK04} shows a way to write spectral clustering 
as a (weighted) kernel \kMeans problem,
and use the kernel \kMeans solution to produce a spectral clustering.
Specifically, let $D$ be an $n \times n$ diagonal matrix such that
$D_{ii} = \sum_{j \in [n]}A_{ij}$. Then according to~\cite{DBLP:conf/kdd/DhillonGK04},
spectral clustering can be written as a weighted kernel \kMeans problem
with weights $w_i := D_{ii}$ and kernel function $K := D^{-1} A D^{-1}$,
provided that $A$ is positive semidefinite (which could be viewed as a kernel).
We apply this reduction, and use the abovementioned coreset-based kernelized \kMeanspp to solve kernel \kMeans.
We experiment on the subsampled Twitter dataset with varying number of points,
and we use the polynomial and RBF kernels as the similarity matrix $A$.

However, it takes $\Theta(n^2)$ time if we evaluate $D_{ii}$ naively.
To resolve this issue, we draw a uniform sample $S$ from $[n]$,
and use $\hat{D}_{ii} := \frac{n}{\vert S\vert } \sum_{j \in S}{A_{ij}}$ as an estimate for $D_{ii}$.
The accuracy of $\hat{D}$ is justified by previous work on kernel density estimation~\cite{DBLP:conf/compgeom/JoshiKPV11},
and for our application we set $\vert S\vert = 1000$ which achieves good accuracy.

We compare our implementation with the spectral clustering solver
from the well-known \sklearn library~\cite{sklearn}. 
The experimental results, reported in Figure~\ref{fig:spectral},
show that our approach has a much better asymptotic growth of running time than that of \sklearn's,
and achieves more than 1000x of speedup already for moderately large datasets ($n = 40000$). 
Although the difference of absolute running time might be partially caused by efficiency issues of the Python language
used in the \sklearn implementation (recall that our implementation is in C++),
the outperformance in asymptotic growth suggests that our improvement in running time is fundamental, 
not only due to the programming language.
We also observe that our approach yields better objective values than \sklearn.
One possible reason is that \sklearn might be conservative in
using more iterations to gain better accuracy,
due to the expensive computational cost that we do not suffer.
We also observe that our approach yields better objective values than \sklearn.
One possible reason is that \sklearn might be conservative in
using more iterations to gain better accuracy,
due to the expensive computational cost that we do not suffer.

\begin{figure}[t]
    \centering
    \begin{subfigure}[b]{0.39\textwidth}
        \centering
        \includegraphics[width=\textwidth]{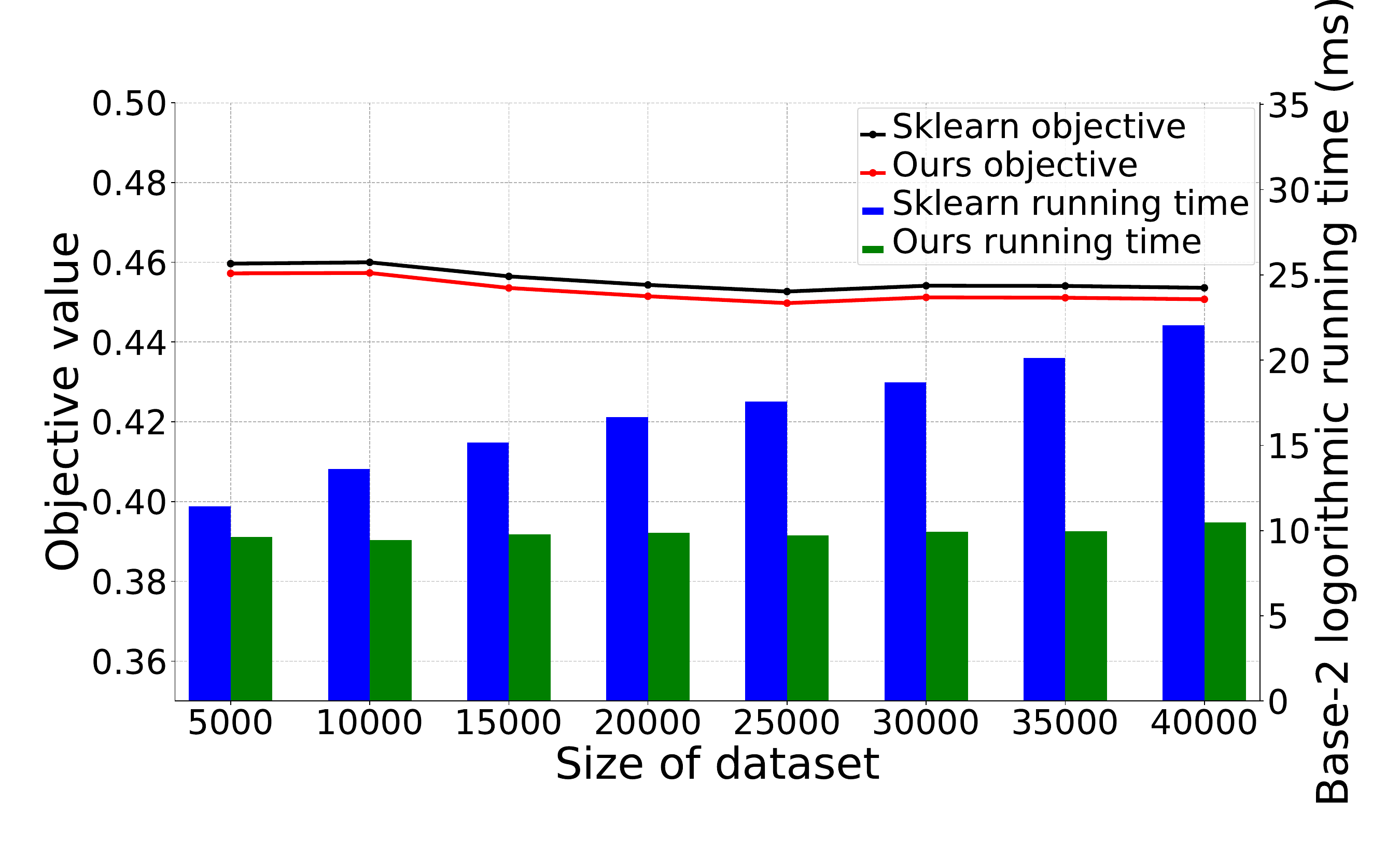}
        \caption*{RBF kernel}
    \end{subfigure} \qquad
    \begin{subfigure}[b]{0.39\textwidth}
        \centering
        \includegraphics[width=\textwidth]{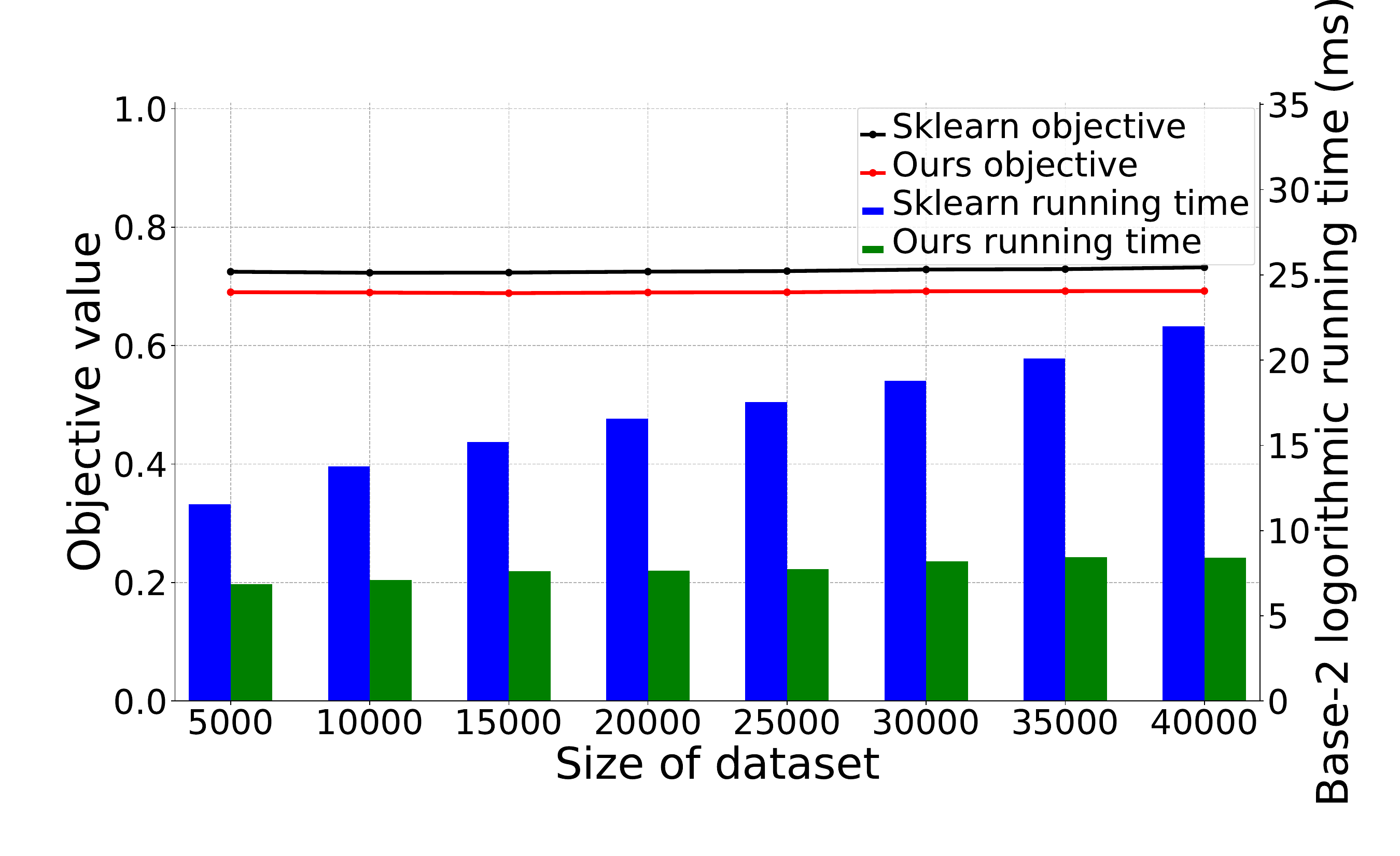}
        \caption*{Polynomial kernel}
    \end{subfigure}
    \caption{Speedup of spectral clustering using coreset-based kernelized \kMeanspp, with coreset size $N=2000$.
      Similar to Figure~\ref{fig:kmeanspp}, we run each algorithm 10 times,
      report the average running time and the minimum objective value.}
    \label{fig:spectral}
\end{figure}

\bibliographystyle{alphaurl}
\bibliography{ref}

\newcommand{\etalchar}[1]{$^{#1}$}
\begin{thebibliography}{GGKC21}

\bibitem[AV07]{DBLP:conf/soda/ArthurV07}
David Arthur and Sergei Vassilvitskii.
\newblock k-means++: the advantages of careful seeding.
\newblock In {\em {SODA}}, pages 1027--1035. {SIAM}, 2007.

\bibitem[BBC{\etalchar{+}}19]{DBLP:conf/stoc/BecchettiBC0S19}
Luca Becchetti, Marc Bury, Vincent {Cohen-Addad}, Fabrizio Grandoni, and Chris Schwiegelshohn.
\newblock Oblivious dimension reduction for $k$-means: beyond subspaces and the {Johnson-Lindenstrauss} lemma.
\newblock In {\em {STOC}}, pages 1039--1050. {ACM}, 2019.

\bibitem[BEL13]{DBLP:conf/nips/BalcanEL13}
Maria{-}Florina Balcan, Steven Ehrlich, and Yingyu Liang.
\newblock Distributed $k$-means and $k$-median clustering on general communication topologies.
\newblock In {\em {NIPS}}, pages 1995--2003, 2013.

\bibitem[BF20]{DBLP:journals/algorithms/BargerF20}
Artem Barger and Dan Feldman.
\newblock Deterministic coresets for $k$-means of big sparse data.
\newblock {\em Algorithms}, 13(4):92, 2020.

\bibitem[BJKW21]{DBLP:conf/soda/BravermanJKW21}
Vladimir Braverman, Shaofeng~H.{-}C. Jiang, Robert Krauthgamer, and Xuan Wu.
\newblock Coresets for clustering in excluded-minor graphs and beyond.
\newblock In {\em {SODA}}, pages 2679--2696. {SIAM}, 2021.

\bibitem[CGS18]{twitter_data}
T-H.~Hubert Chan, Arnaud Guerquin, and Mauro Sozio.
\newblock Twitter data set, 2018.
\newblock URL: \url{https://github.com/fe6Bc5R4JvLkFkSeExHM/k-center}.

\bibitem[CJHJ11]{DBLP:conf/kdd/ChittaJHJ11}
Radha Chitta, Rong Jin, Timothy~C. Havens, and Anil~K. Jain.
\newblock Approximate kernel $k$-means: {S}olution to large scale kernel clustering.
\newblock In {\em {KDD}}, pages 895--903. {ACM}, 2011.

\bibitem[CJJ12]{DBLP:conf/icdm/ChittaJJ12}
Radha Chitta, Rong Jin, and Anil~K. Jain.
\newblock Efficient kernel clustering using random {F}ourier features.
\newblock In {\em {ICDM}}, pages 161--170. {IEEE} Computer Society, 2012.

\bibitem[CP17]{DBLP:conf/alt/ChenP17}
Di~Chen and Jeff~M. Phillips.
\newblock Relative error embeddings of the gaussian kernel distance.
\newblock In {\em {ALT}}, volume~76 of {\em Proceedings of Machine Learning Research}, pages 560--576. {PMLR}, 2017.

\bibitem[CS07]{DBLP:journals/rsa/CzumajS07}
Artur Czumaj and Christian Sohler.
\newblock Sublinear-time approximation algorithms for clustering via random sampling.
\newblock {\em Random Struct. Algorithms}, 30(1-2):226--256, 2007.
\newblock \href {https://doi.org/10.1002/rsa.20157} {\path{doi:10.1002/rsa.20157}}.

\bibitem[CSS21]{DBLP:conf/stoc/Cohen-AddadSS21}
Vincent Cohen{-}Addad, David Saulpic, and Chris Schwiegelshohn.
\newblock A new coreset framework for clustering.
\newblock In {\em {STOC}}, pages 169--182. {ACM}, 2021.

\bibitem[DG17]{adult_data}
Dheeru Dua and Casey Graff.
\newblock {UCI} machine learning repository, adult dataset, 2017.
\newblock URL: \url{https://archive.ics.uci.edu/ml/datasets/adult}.

\bibitem[DGK04]{DBLP:conf/kdd/DhillonGK04}
Inderjit~S. Dhillon, Yuqiang Guan, and Brian Kulis.
\newblock Kernel $k$-means: spectral clustering and normalized cuts.
\newblock In {\em {KDD}}, pages 551--556. {ACM}, 2004.

\bibitem[DHS05]{DBLP:conf/ecml/DingHS05}
Chris H.~Q. Ding, Xiaofeng He, and Horst~D. Simon.
\newblock Nonnegative lagrangian relaxation of \emph{K}-means and spectral clustering.
\newblock In {\em {ECML}}, volume 3720 of {\em Lecture Notes in Computer Science}, pages 530--538. Springer, 2005.

\bibitem[FKW21]{DBLP:conf/icml/FengKW21}
Zhili Feng, Praneeth Kacham, and David~P. Woodruff.
\newblock Dimensionality reduction for the sum-of-distances metric.
\newblock In {\em {ICML}}, volume 139 of {\em Proceedings of Machine Learning Research}, pages 3220--3229. {PMLR}, 2021.

\bibitem[FL11]{DBLP:conf/stoc/FeldmanL11}
Dan Feldman and Michael Langberg.
\newblock A unified framework for approximating and clustering data.
\newblock In {\em {STOC}}, pages 569--578. {ACM}, 2011.

\bibitem[FMS07]{DBLP:conf/compgeom/FeldmanMS07}
Dan Feldman, Morteza Monemizadeh, and Christian Sohler.
\newblock A {PTAS} for $k$-means clustering based on weak coresets.
\newblock In {\em {SoCG}}, pages 11--18. {ACM}, 2007.

\bibitem[FSS20]{DBLP:journals/siamcomp/FeldmanSS20}
Dan Feldman, Melanie Schmidt, and Christian Sohler.
\newblock Turning big data into tiny data: Constant-size coresets for $k$-means, {PCA}, and projective clustering.
\newblock {\em {SIAM} J. Comput.}, 49(3):601--657, 2020.

\bibitem[GGKC21]{DBLP:journals/corr/abs-2106-08443}
Benyamin Ghojogh, Ali Ghodsi, Fakhri Karray, and Mark Crowley.
\newblock Reproducing kernel {H}ilbert space, {M}ercer's theorem, eigenfunctions, {N}ystr{\"{o}}m method, and use of kernels in machine learning: Tutorial and survey.
\newblock {\em CoRR}, abs/2106.08443, 2021.

\bibitem[Gir02]{DBLP:journals/tnn/Girolami02}
Mark~A. Girolami.
\newblock Mercer kernel-based clustering in feature space.
\newblock {\em {IEEE} Trans. Neural Networks}, 13(3):780--784, 2002.

\bibitem[HK20]{DBLP:conf/esa/HenzingerK20}
Monika Henzinger and Sagar Kale.
\newblock Fully-dynamic coresets.
\newblock In {\em {ESA}}, volume 173 of {\em LIPIcs}, pages 57:1--57:21. Schloss Dagstuhl-Leibniz-Zentrum f{\"{u}}r Informatik, 2020.

\bibitem[HM04]{DBLP:conf/stoc/Har-PeledM04}
Sariel Har{-}Peled and Soham Mazumdar.
\newblock On coresets for $k$-means and $k$-median clustering.
\newblock In {\em {STOC}}, pages 291--300. {ACM}, 2004.

\bibitem[HV20]{DBLP:conf/stoc/HuangV20}
Lingxiao Huang and Nisheeth~K. Vishnoi.
\newblock Coresets for clustering in {E}uclidean spaces: {I}mportance sampling is nearly optimal.
\newblock In {\em {STOC}}, pages 1416--1429. {ACM}, 2020.

\bibitem[JKPV11]{DBLP:conf/compgeom/JoshiKPV11}
Sarang~C. Joshi, Raj~Varma Kommaraju, Jeff~M. Phillips, and Suresh Venkatasubramanian.
\newblock Comparing distributions and shapes using the kernel distance.
\newblock In {\em {SoCG}}, pages 47--56. {ACM}, 2011.

\bibitem[KLLL05]{DBLP:journals/pr/KimLLL05}
Dae{-}Won Kim, Ki~Young Lee, Doheon Lee, and Kwang~Hyung Lee.
\newblock Evaluation of the performance of clustering algorithms in kernel-induced feature space.
\newblock {\em Pattern Recognit.}, 38(4):607--611, 2005.

\bibitem[KSS04]{DBLP:conf/focs/KumarSS04}
Amit Kumar, Yogish Sabharwal, and Sandeep Sen.
\newblock A simple linear time (1+{\(\acute{\epsilon}\)})-approximation algorithm for k-means clustering in any dimensions.
\newblock In {\em {FOCS}}, pages 454--462. {IEEE} Computer Society, 2004.

\bibitem[LS10]{DBLP:conf/soda/LangbergS10}
Michael Langberg and Leonard~J. Schulman.
\newblock Universal epsilon-approximators for integrals.
\newblock In {\em {SODA}}, pages 598--607. {SIAM}, 2010.

\bibitem[MCR14]{bank_data}
S.~Moro, P.~Cortez, and P.~Rita.
\newblock {UCI} machine learning repository, bank dataset, 2014.
\newblock URL: \url{https://archive.ics.uci.edu/ml/datasets/Bank+Marketing}.

\bibitem[MM17]{DBLP:conf/nips/MuscoM17}
Cameron Musco and Christopher Musco.
\newblock Recursive sampling for the {N}ystrom method.
\newblock In {\em {NIPS}}, pages 3833--3845, 2017.

\bibitem[MTH90]{census1990_data}
Chris Meek, Bo~Thiesson, and David Heckerman.
\newblock {UCI} machine learning repository, census1990 dataset, 1990.
\newblock URL: \url{http://archive.ics.uci.edu/ml/datasets/US+Census+Data+(1990)}.

\bibitem[PVG{\etalchar{+}}11]{sklearn}
F.~Pedregosa, G.~Varoquaux, A.~Gramfort, V.~Michel, B.~Thirion, O.~Grisel, M.~Blondel, P.~Prettenhofer, R.~Weiss, V.~Dubourg, J.~Vanderplas, A.~Passos, D.~Cournapeau, M.~Brucher, M.~Perrot, and E.~Duchesnay.
\newblock Scikit-learn: Machine learning in {P}ython.
\newblock {\em Journal of Machine Learning Research}, 12:2825--2830, 2011.

\bibitem[RD20]{DBLP:conf/icpr/RenD20}
Yuanhang Ren and Ye~Du.
\newblock Uniform and non-uniform sampling methods for sub-linear time $k$-means clustering.
\newblock In {\em {ICPR}}, pages 7775--7781. {IEEE}, 2020.

\bibitem[RR07]{DBLP:conf/nips/RahimiR07}
Ali Rahimi and Benjamin Recht.
\newblock Random features for large-scale kernel machines.
\newblock In {\em {NIPS}}, pages 1177--1184. Curran Associates, Inc., 2007.

\bibitem[Sch14]{schmidt_thesis}
Melanie Schmidt.
\newblock {\em Coresets and streaming algorithms for the $k$-means problem and related clustering objectives}.
\newblock PhD thesis, 2014.

\bibitem[SSM98]{DBLP:journals/neco/ScholkopfSM98}
Bernhard Sch{\"{o}}lkopf, Alexander~J. Smola, and Klaus{-}Robert M{\"{u}}ller.
\newblock Nonlinear component analysis as a kernel eigenvalue problem.
\newblock {\em Neural Comput.}, 10(5):1299--1319, 1998.

\bibitem[SW18]{DBLP:conf/focs/SohlerW18}
Christian Sohler and David~P. Woodruff.
\newblock Strong coresets for $k$-median and subspace approximation: Goodbye dimension.
\newblock In {\em {FOCS}}, pages 802--813. {IEEE} Computer Society, 2018.

\bibitem[WGM19]{DBLP:journals/jmlr/WangGM19}
Shusen Wang, Alex Gittens, and Michael~W. Mahoney.
\newblock Scalable kernel $k$-means clustering with {N}ystr\"om approximation: Relative-error bounds.
\newblock {\em J. Mach. Learn. Res.}, 20:12:1--12:49, 2019.
\newblock URL: \url{http://jmlr.org/papers/v20/17-517.html}.

\bibitem[ZR02]{DBLP:conf/icpr/ZhangR02}
Rong Zhang and Alexander~I. Rudnicky.
\newblock A large scale clustering scheme for kernel $k$-means.
\newblock In {\em {ICPR} {(4)}}, pages 289--292. {IEEE} Computer Society, 2002.

\end{thebibliography}

\end{document}